\DeclareMathOperator{\bull}{bull}
\DeclareMathOperator{\cw}{cw}
\newcommand{\ssi}{\subseteq_i}
\newcommand{\li}{\subseteq_{li}}
\newcounter{ctrclaim}[theorem]
\newcounter{ctrobs}[theorem]
\newcounter{ctrcase}[theorem]
\newcounter{ctrsubcase}[ctrcase]
\newcounter{ctrfake}
\newenvironment{enumeratei}{\begin{enumerate}[(i)]}{\end{enumerate}}
\newcommand\figurenames{Figs.}
\newcommand\faketheorem[1]{\refstepcounter{ctrfake}{\bf #1}}
\newcommand\displaycase[1]{{\bf #1}}
\newcommand{\clm}[1]{\medskip\phantomsection\refstepcounter{ctrclaim}\noindent\displaycase{Claim \thectrclaim.}{\em #1}\\}
\newcommand{\obs}[1]{\medskip\phantomsection\refstepcounter{ctrobs}\noindent\displaycase{Observation \thectrobs.}{\em #1}\\}
\title{Bounding the Clique-Width of ${H}$-free Split Graphs
\thanks{An extended abstract of this paper appeared in the proceedings of EuroComb 2015~\cite{BDHP15b}.
The research in this paper was supported by EPSRC (EP/K025090/1).
The third author is grateful for the generous support of the Graduate (International) Research Travel Award from Simon Fraser University and Dr. Pavol Hell's NSERC Discovery Grant.}
}
\author{Andreas Brandst{\"a}dt\inst{1} \and Konrad K. Dabrowski\inst{2} \and\\ Shenwei Huang\inst{3} \and Dani\"el Paulusma\inst{2}}
\institute{
Institute of Computer Science, Universität Rostock,\\
Albert-Einstein-Straße 22, 18059 Rostock, Germany\\
\texttt{ab@informatik.uni-rostock.de}
\and
School of Engineering and Computing Sciences, Durham University,\\
Science Laboratories, South Road,
Durham DH1 3LE, United Kingdom
\texttt{\{konrad.dabrowski,daniel.paulusma\}@durham.ac.uk}
\and
School of Computing Science, Simon Fraser University,\\
8888 University Drive, Burnaby B.C., V5A 1S6, Canada\\
\texttt{shenweih@sfu.ca}
}
\begin{document}
\maketitle
\setcounter{footnote}{0}

\begin{abstract}
A graph is $H$-free if it has no induced subgraph isomorphic to~$H$.
We continue a study into the boundedness of clique-width of subclasses of perfect graphs. 
We identify
five new classes of $H$-free split graphs whose clique-width is bounded.
Our main result, obtained by combining new and known results, provides a classification of all but 
two stubborn cases, that is, with two potential exceptions we determine {\em all} graphs~$H$ for which the class of $H$-free split graphs has bounded clique-width.
\end{abstract}

\section{Introduction}\label{sec:intro}

Clique-width is a well-studied graph parameter; see for example the surveys of
Gurski~\cite{Gu07} and Kami\'nski, Lozin and Milani\v{c}~\cite{KLM09}. A graph
class is said to be of {\it bounded} clique-width if there is a constant~$c$
such that the clique-width of every graph in the class is at most~$c$.  Much
research has been done identifying whether or not various classes have bounded
clique-width~\cite{BL02,BGMS14,BDHP15,BELL06,BKM06,BK05,BLM04b,BLM04,BM02,BM03,DGP14,DHP0,DLRR12,DP14,DP15,GR99b,LR04,LR06,LV08,MR99}.
For instance, the Information System on Graph Classes and their
Inclusions~\cite{isgci} maintains a record of graph classes for which this is
known. In a recent series of papers~\cite{BDHP15,DHP0,DP15} the
clique-width of graph classes characterized by two forbidden induced subgraphs
was investigated. In particular we refer to~\cite{DP15} for  details on how new
results can be combined with known results to give a classification for all
but~$13$ open cases (up to an equivalence relation).  Similar studies have been
performed for variants of clique-width, such as linear
clique-width~\cite{HMP12} and power-bounded clique-width~\cite{BGMS14}.
Moreover, the (un)boundedness of the clique-width of a graph class seems to be
related to the computational complexity of the {\sc Graph Isomorphism} problem,
which has in particular been investigated for graph classes defined by two
forbidden induced subgraphs~\cite{KS12,Sc15}.
Indeed, a common technique (see e.g.~\cite{KLM09}) for showing that a class of graphs has unbounded clique-width relies on showing that it contains simple path encodings of walls or of graphs in some other specific graph class known to have unbounded clique-width.
Furthermore, Grohe and Schweitzer~\cite{GS15} recently proved that {\sc Graph Isomorphism} is polynomial-time solvable on graphs of bounded clique-width.

In this paper we continue a study into the boundedness of clique-width of
subclasses of perfect graphs.
Clique-width is still a very difficult graph parameter to deal with.
For instance, deciding whether or not a graph has clique-width at most~$c$ for some
fixed constant~$c$ is only known to be polynomial-time solvable if $c\leq
3$~\cite{CHLRR12}, but is a long-standing open problem for $c\geq 4$.
Our long-term goal is to increase our understanding of clique-width.
To this end we aim to identify new classes of bounded clique-width.
In order to explain some previously known results, along with our new ones, we first give some terminology.

\medskip
\noindent
{\bf Terminology.} 
For two vertex-disjoint graphs~$G$ and~$H$, the {\it disjoint union} $(V(G)\cup V(H), E(G)\cup E(H))$ is denoted by~$G+\nobreak H$ and the disjoint union of~$r$ copies of~$G$ is denoted by~$rG$. The {\it complement} of a graph~$G$, denoted by~$\overline{G}$, has vertex set $V(\overline{G})=V(G)$ and an edge between two distinct vertices
if and only if these vertices are not adjacent in~$G$. 
For two graphs~$G$ and~$H$ we write $H\ssi G$ to indicate that~$H$ is an induced subgraph of~$G$.
The graphs $C_r,K_r,K_{1,r-1}$ and~$P_r$ denote the cycle, complete graph, star and path on~$r$ vertices, respectively.
The graph~$S_{h,i,j}$, for $1\leq h\leq i\leq j$, denotes the {\em subdivided claw}, that is
the tree that has only one vertex~$x$ of degree~$3$ and exactly three leaves, which are of distance~$h$,~$i$ and~$j$ from~$x$, respectively.
For a set of graphs $\{H_1,\ldots,H_p\}$, a graph~$G$ is {\em $(H_1,\ldots,H_p)$-free} if it has no induced subgraph isomorphic to a graph in $\{H_1,\ldots,H_p\}$.
The {\it bull} is the graph with vertices $a,b,c,d,e$ and edges $ab,bc,ca,ad,be$; the {\em dart} is the graph obtained from the bull by adding the edge~$bd$ (see also \figurename~\ref{fig:bull-dart}).
\begin{figure}
\centering
\begin{tabular}{cc}
\begin{minipage}{0.33\textwidth}
\begin{center}
\scalebox{0.75}{
{\begin{tikzpicture}[scale=1,rotate=90]
\GraphInit[vstyle=Simple]
\SetVertexSimple[MinSize=6pt]
\Vertex[x=0,y=0]{a}
\Vertex[a=30,d=1]{b}
\Vertex[a=30,d=2]{c}
\Vertex[a=-30,d=1]{d}
\Vertex[a=-30,d=2]{e}
\Edges(c,b,a,d,e)
\Edges(b,d)
\end{tikzpicture}}}
\end{center}
\end{minipage}
&
\begin{minipage}{0.33\textwidth}
\begin{center}
\scalebox{0.75}{
{\begin{tikzpicture}[scale=1,rotate=270]
\GraphInit[vstyle=Simple]
\SetVertexSimple[MinSize=6pt]
\Vertex[x=0,y=0]{a}
\Vertex[a=60,d=1]{b}
\Vertex[a=-60,d=1]{c}
\Vertex[a=0,d=1]{d}
\Vertex[a=0,d=-1]{y}
\Edges(y,a,b,d,c,a,d)
\end{tikzpicture}}}
\end{center}
\end{minipage}\\
\\
bull & dart
\end{tabular}
\caption{\label{fig:bull-dart} The bull and the dart.}
\end{figure}
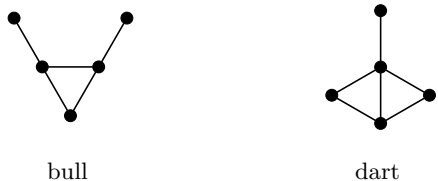

A graph~$G$ is {\em perfect} if, for every induced subgraph $H\ssi G$, the chromatic number of~$H$ equals its clique number.
By the Strong Perfect Graph Theorem~\cite{CRST06}, a graph~$G$ is perfect if and only if both~$G$ and~$\overline{G}$ are $(C_5,C_7,C_9,\ldots)$-free.
A graph~$G$ is {\em chordal} if it is $(C_4,C_5,\ldots)$-free and {\em weakly chordal} if both~$G$ and~$\overline{G}$ are $(C_5,C_6,\ldots)$-free. 
Every split graph is chordal, every chordal graph is weakly chordal and every weakly chordal graph is perfect.

\medskip
\noindent
{\bf  Known Results on Subclasses of Perfect Graphs.}
We start off with the following known theorem, which shows that the restriction of $H$-free graphs to $H$-free weakly chordal graphs does not yield any new graph classes of bounded clique-width, as both classifications are exactly the same.

\begin{theorem}[\cite{BDHP15,DP15}]\label{thm:weakly-chordal}
Let~$H$ be a graph. The class of
$H$-free (weakly chordal) graphs has bounded clique-width if and only if~$H$ is an induced subgraph of~$P_4$.
\end{theorem}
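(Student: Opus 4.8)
The plan is to treat the two directions separately, in each case proving a slightly stronger statement: for ``if'', that $H\ssi P_4$ forces clique-width at most~$2$ even on \emph{all} $H$-free graphs, and for ``only if'', that $H\not\ssi P_4$ forces unbounded clique-width already within the subclass of $H$-free \emph{weakly chordal} graphs. Since $H$-free weakly chordal graphs form a subclass of $H$-free graphs, this stronger ``only if'' implication handles both versions of the statement at once.

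For the ``if'' direction I would observe that if $H\ssi P_4$ then every graph containing an induced $P_4$ also contains an induced $H$, so every $H$-free graph is $P_4$-free; as $P_4$-free graphs (cographs) are well known to have clique-width at most~$2$, the class of $H$-free graphs, hence its subclass of $H$-free weakly chordal graphs, has clique-width at most~$2$. For the ``only if'' direction the idea is to supply, for each $H\not\ssi P_4$, a ``hard'' subclass of the weakly chordal graphs that is $H$-free yet has unbounded clique-width. Three such classes suffice: the class $\mathcal{S}$ of split graphs (chordal, hence weakly chordal; $(2K_2,C_4,C_5)$-free; known to have unbounded clique-width); the class $\mathcal{B}$ of chordal bipartite graphs (weakly chordal; triangle-free; known to have unbounded clique-width); and the class $\overline{\mathcal{B}}=\{\overline{G} : G\in\mathcal{B}\}$ of their complements, which is weakly chordal (weakly chordal graphs form a self-complementary class), has unbounded clique-width (taking complements changes clique-width by at most a factor~$2$), and consists of graphs of independence number at most~$2$, i.e.\ of $3K_1$-free graphs.

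The case analysis then proceeds as follows. If $K_3\ssi H$, then every member of $\mathcal{B}$ is $H$-free. Otherwise $H$ is triangle-free; if moreover $H$ contains an induced $2K_2$, $C_4$ or $C_5$, then every member of $\mathcal{S}$ is $H$-free — and this already covers the case that $H$ has any induced cycle $C_k$ with $k\geq 4$, since $C_k\si 2K_2$ whenever $k\geq 6$. In the remaining case $H$ is $(K_3,2K_2,C_4,C_5)$-free, hence a triangle-free split graph, hence (having no induced cycle of any length) a forest. A short structural argument shows that a $2K_2$-free, $3K_1$-free forest is an induced subgraph of $P_4$: each component has diameter at most~$3$ and so is a star or a double star, and $3K_1$-freeness then forces the forest to be one of $K_1,K_2,2K_1,P_3,K_1+K_2,P_4$. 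Since $H\not\ssi P_4$, it follows that $3K_1\ssi H$, and then every member of $\overline{\mathcal{B}}$ is $H$-free; this exhausts all cases.

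The main obstacle is the substantive input behind the three base classes — constructing explicit families of split graphs and of chordal bipartite graphs whose clique-width tends to infinity, typically via simple path-encodings of walls or of other standard families known to have unbounded clique-width; once these are in hand the rest is an elementary dichotomy. A secondary point requiring care is verifying the case analysis is exhaustive, in particular that the structural classification of the ``small'' graphs is complete, so that no $H\not\ssi P_4$ slips through.
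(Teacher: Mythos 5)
The paper does not prove this theorem itself but cites it from~\cite{BDHP15,DP15}; your argument is correct and is essentially the standard one used in those works. The ``if'' direction via cographs is right, and for the ``only if'' direction your case analysis correctly reduces to the fact that any~$H$ that is not an induced subgraph of~$P_4$ must contain one of $K_3$, $3K_1$, $2K_2$, $C_4$, $C_5$, each of which is excluded by one of your three weakly chordal witness classes of unbounded clique-width (chordal bipartite graphs, their complements, and split graphs).
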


Motivated by Theorem~\ref{thm:weakly-chordal} we investigated classes of $H$-free chordal graphs in an attempt to identify {\it new} classes of bounded clique-width and 
as a (successful) means to find reductions to solve more cases in our classification for $(H_1,H_2)$-free graphs.
This classification for classes of $H$-free chordal graphs is almost complete except for two cases, which we call~$F_1$ and~$F_2$ (see \figurename~\ref{fig:bounded-split} for a definition).

\begin{theorem}[\cite{BDHP15}]\label{thm:chordal-classification}
Let~$H$ be a graph not in $\{F_1,F_2\}$. The class of $H$-free chordal graphs
has bounded clique-width if and only if
\begin{itemize}
\item [$\bullet$] $H=K_r$ for some $r\geq 1$;
\item [$\bullet$] $H\ssi \bull$;
\item [$\bullet$] $H\ssi P_1+P_4$;
\item [$\bullet$] $H\ssi \overline{P_1+P_4}$;
\item [$\bullet$] $H\ssi \overline{K_{1,3}+2P_1}$;
\item [$\bullet$] $H\ssi P_1+\overline{P_1+P_3}$;
\item [$\bullet$] $H\ssi P_1+\overline{2P_1+P_2}$ or
\item [$\bullet$] $H\ssi \overline{S_{1,1,2}}$.
\end{itemize}
\end{theorem}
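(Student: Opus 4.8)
The plan is to prove the two implications of the equivalence separately, exploiting throughout that clique-width does not increase when passing to an induced subgraph: if $H\ssi H'$ then the class of $H$-free chordal graphs is contained in the class of $H'$-free chordal graphs, so for the ``if'' direction it suffices to bound the clique-width of $L$-free chordal graphs for each graph $L$ appearing in the list, while for the ``only if'' direction it suffices to exhibit, for each \emph{minimal} graph $H$ that is neither a complete graph nor an induced subgraph of one of the seven non-complete graphs listed, a family of chordal graphs of unbounded clique-width none of whose members contains $H$ as an induced subgraph.

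For the bounded direction, the case $H=K_r$ is immediate: a $K_r$-free chordal graph has clique number at most $r-1$ and hence treewidth at most $r-2$, so its clique-width is bounded because bounded treewidth implies bounded clique-width. For each of the seven remaining graphs $L$, I would take an $L$-free chordal graph $G$, fix a clique tree of $G$ (equivalently work with its maximal cliques and minimal separators, or with a split partition when $G$ happens to be split), and argue that forbidding $L$ forces $V(G)$ to decompose into a bounded number of parts on which $G$ behaves ``simply'' --- for instance cliques or independent sets admitting only boundedly many distinct neighbourhood types across the separators. One then recovers $G$ from a graph of bounded clique-width by a bounded number of edge-complementation operations between and inside these parts, and the standard lemmas describing how such operations affect clique-width yield the bound. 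Some of these seven classes are already known to have bounded clique-width; the remaining ones need new structural arguments, the most delicate being $\overline{S_{1,1,2}}$-free (equivalently co-chair-free) and $\overline{K_{1,3}+2P_1}$-free chordal graphs, where one must control how several large cliques may overlap in the clique tree.

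For the unbounded direction, I would first run a finite case analysis to write down explicitly the minimal graphs $H$ that are neither complete nor an induced subgraph of any of the seven graphs above. Since those graphs have at most six vertices, every such minimal $H$ has at most seven vertices, so this list is finite; the graphs $F_1$ and $F_2$ are precisely the two entries for which the construction below is unavailable (and which are not already ruled out by containing a smaller obstruction), which is why they are excluded from the statement. For every other minimal such $H$ I would provide a family of chordal graphs of unbounded clique-width that is $H$-free. Only a few families are needed: if $H$ contains an induced cycle on at least four vertices it suffices to take \emph{any} family of chordal graphs of unbounded clique-width (chordal graphs are free of such cycles), while the remaining $H$ are handled by a small number of tailor-made families --- typically split graphs $(C\cup I,E)$ with $C$ a clique, $I$ an independent set, and the bipartite graph between $C$ and $I$ chosen to be grid-like --- whose induced subgraphs can be described explicitly, so that $H$-freeness is easy to verify. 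Since being of unbounded clique-width is inherited when one enlarges $H$ to a graph containing it as an induced subgraph, this settles all remaining $H$.

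The main obstacle is the bounded direction for the new classes: obtaining structural descriptions of, say, $\overline{S_{1,1,2}}$-free and $\overline{K_{1,3}+2P_1}$-free chordal graphs that are precise enough to read off a bounded clique-width expression, and checking that the various local configurations can be glued together without the width growing. On the unbounded side the only real subtlety is the bookkeeping needed to be sure the chosen families avoid \emph{all} minimal bad graphs simultaneously and to pin down exactly why $F_1$ and $F_2$ escape every family we can build.
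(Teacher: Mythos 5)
You should first be aware that the paper does not prove this statement: Theorem~\ref{thm:chordal-classification} is imported verbatim from~\cite{BDHP15}, and the present paper only records it (noting that the bipartite classification, Theorem~\ref{thm:bipartite}, was a key ingredient in that external proof). So there is no in-paper proof to match your proposal against; the only question is whether your text would stand as a proof on its own, and it would not. Apart from the case $H=K_r$ (where the argument via clique number~$\le r-1$, hence treewidth~$\le r-2$, hence bounded clique-width, is correct), every substantive step is deferred rather than carried out. For the seven non-complete graphs in the list you only assert that a clique-tree/separator analysis ``would'' yield a decomposition into boundedly many simple parts recoverable by boundedly many complementations, and you explicitly flag the hardest cases ($\overline{S_{1,1,2}}$-free and $\overline{K_{1,3}+2P_1}$-free chordal graphs) as requiring ``new structural arguments'' that you do not supply. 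On the unbounded side you neither compute the finite list of minimal graphs not covered by the classification, nor exhibit the tailor-made $H$-free chordal (split) families of unbounded clique-width, nor verify their $H$-freeness; the only case actually settled is the trivial one where $H$ is non-chordal. Since this structural and constructive work is precisely where the whole content of the theorem lies, the proposal is a plan of attack, not a proof.

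Two further points deserve correction. First, your framing of why $F_1$ and $F_2$ are excluded is misleading: they are excluded because their status is genuinely open in \emph{both} directions (no boundedness proof and no unboundedness construction is known for $F_1$-free or $F_2$-free chordal graphs), so the theorem simply makes no claim about them; ``pinning down exactly why $F_1$ and $F_2$ escape every family we can build'' is not a step that any proof of the stated theorem needs or could contain. Second, if you do intend to flesh out the bounded direction, note that the route taken in~\cite{BDHP15} (and echoed in the present paper's split-graph arguments, e.g.\ Theorem~\ref{thm:split-bounded-from-weakly-bip} and Lemma~\ref{lem:split-f4-f5-lemma}) is not a direct clique-tree analysis but a reduction to labelled bipartite graphs: one complements the clique side of a split partition and invokes the classification of weakly $H^\ell$-free bipartite graphs (Theorem~\ref{thm:bipweak}), together with the facts that vertex deletion, subgraph complementation and bipartite complementation preserve boundedness of clique-width, and the reduction to prime graphs via Lemma~\ref{lem:prime}. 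Your clique-tree approach is a legitimately different strategy, but as it stands it is an unexecuted one, so the gap is the missing execution rather than a wrong idea.
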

In contrast to chordal graphs, the classification for bipartite graphs, another class of perfect graphs, is complete. This classification was used in the proof of 
Theorem~\ref{thm:chordal-classification} and it is similar to a characterization of Lozin and Volz~\cite{LV08} for a different variant of the notion of $H$-freeness in bipartite graphs (see~\cite{DP14} for an explanation of the difference 
between $H$-free bipartite graphs and the so-called strongly $H$-free bipartite graphs considered in~\cite{LV08}).

\begin{theorem}[\cite{DP14}]\label{thm:bipartite}
Let~$H$ be a graph. The class of $H$-free bipartite graphs has bounded
clique-width if and only~if
\begin{itemize}
\item [$\bullet$] $H=sP_1$ for some $s\geq 1$;
\item [$\bullet$]  $H\ssi K_{1,3}+3P_1$;
\item [$\bullet$]  $H\ssi K_{1,3}+P_2$;
\item [$\bullet$]  $H\ssi P_1+S_{1,1,3}$ or
\item [$\bullet$]  $H\ssi S_{1,2,3}$.
\end{itemize}
\end{theorem}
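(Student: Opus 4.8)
\medskip\noindent The plan is to prove the two implications separately: for each $H$ in the list, the class of $H$-free bipartite graphs has bounded clique-width, and for every other $H$ it has unbounded clique-width. Throughout we use standard facts about clique-width: it does not increase under taking induced subgraphs; a hereditary class has bounded clique-width if and only if all its prime graphs (with respect to modular decomposition) do; bounded treewidth implies bounded clique-width; and taking the bipartite complement with respect to a fixed bipartition changes the clique-width by at most a constant factor.

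For the bounded direction, observe that $\{H'\text{-free bipartite graphs}\}\subseteq\{H\text{-free bipartite graphs}\}$ whenever $H'\ssi H$, so it suffices to treat the maximal members of the list. If $H=sP_1$, then an $H$-free bipartite graph has no independent set of size $s$, hence each side of its bipartition has at most $s-1$ vertices; the class is finite, so trivially of bounded clique-width. For $H\in\{K_{1,3}+3P_1,\;K_{1,3}+P_2,\;P_1+S_{1,1,3},\;S_{1,2,3}\}$ the idea is to reduce to the analysis of Lozin and Volz~\cite{LV08} of bipartite graphs that are \emph{strongly} $H$-free, i.e.\ contain no induced subgraph isomorphic to $H$ by a bipartition-respecting embedding: for the versions of these four graphs arising there, the strongly $H$-free bipartite graphs are known to have bounded clique-width. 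The work lies in passing from ordinary $H$-freeness to strong $H$-freeness, which differ only because $H$ may embed into a bipartite graph using a bipartition of $H$ incompatible with the host's bipartition. I would handle this by restricting to prime bipartite graphs and showing that a prime $H$-free bipartite graph is either already strongly $H$-free or becomes so after a bounded modification --- deleting a bounded number of vertices, or passing to the bipartite complement --- at a cost of at most a constant in clique-width; for the disconnected graphs $K_{1,3}+3P_1$, $K_{1,3}+P_2$ and $P_1+S_{1,1,3}$ one additionally exploits that forbidding an induced subgraph with several components strongly constrains the anti-neighbourhood of any induced copy of one component, in several subcases yielding bounded treewidth directly.

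For the unbounded direction I would run a chain of reductions. If $H$ is not bipartite, then $H$ is not an induced subgraph of any bipartite graph, so the $H$-free bipartite graphs are \emph{all} bipartite graphs, a class of unbounded clique-width. If $H$ is bipartite but has a cycle, let $\ell$ be the length of a shortest --- hence induced --- cycle of $H$; an induced subgraph of a graph of girth exceeding $\ell$ again has girth exceeding $\ell$, so every bipartite graph of girth exceeding $\ell$ is $H$-free, and for each fixed $\ell$ the bipartite graphs of girth exceeding $\ell$, witnessed by sufficiently subdivided walls, have unbounded clique-width. If $H$ is a forest with a vertex of degree at least $4$, then $K_{1,4}\ssi H$, and since a vertex of degree at least $4$ in a bipartite graph induces $K_{1,4}$ together with four of its neighbours, the $K_{1,4}$-free bipartite graphs are exactly those of maximum degree at most $3$ --- a class containing all walls, hence of unbounded clique-width. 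After these reductions it remains to treat forests $H$ of maximum degree at most $3$ that are not in the list. Here I would determine, by a finite case analysis on the number and arrangement of degree-$3$ vertices, the lengths of the pendant paths, and the number of isolated vertices, the small explicit set of \emph{minimal} such forests --- it contains, among others, $2P_3$, $P_4+P_2$ and $P_2+4P_1$ --- and exhibit for each minimal obstruction $F$ a family of $F$-free bipartite graphs of unbounded clique-width. Since every forbidden forest contains one of these as an induced subgraph, and $F\ssi H$ implies $\{F\text{-free bipartite graphs}\}\subseteq\{H\text{-free bipartite graphs}\}$, this finishes the proof.

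I expect the crux to be twofold. On the bounded side, the real content is the structural description of prime $S_{1,2,3}$-free and $P_1+S_{1,1,3}$-free bipartite graphs underlying the Lozin--Volz bound, together with the delicate bookkeeping needed to route ordinary $H$-free graphs into the scope of that description without losing control of the clique-width. On the unbounded side, the difficulty is that walls are useless against the restrictive obstructions such as $2P_3$ or $P_2+4P_1$ (walls contain both), so one must engineer families of bipartite graphs of unbounded clique-width avoiding these specific induced subgraphs --- and simultaneously check that the proposed finite list of minimal obstructions is complete.
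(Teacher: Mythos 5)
First, a point of reference: this paper does not prove Theorem~\ref{thm:bipartite} at all --- it is quoted from~\cite{DP14} --- so your attempt has to be judged against that work, and as it stands it is an outline with two genuine gaps. On the bounded side, the difficulty you identify is inverted. If $G$ is $H$-free in the ordinary sense, then $G$ contains no copy of $H$ under \emph{any} embedding, so $G$ is automatically strongly $H^\ell$-free for \emph{every} labelling $\ell$ of $H$; hence the class of $H$-free bipartite graphs is a \emph{subclass} of each strongly $H^\ell$-free class, and boundedness transfers by mere containment. Your proposed machinery for ``passing from ordinary $H$-freeness to strong $H$-freeness'' by restricting to prime graphs and performing bounded modifications is therefore unnecessary (and as described it does nothing). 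The real content, which you assume away with ``known to have bounded clique-width,'' is establishing boundedness of the relevant strongly/weakly-free classes in the first place: the Lozin--Volz result does not simply hand you all four maximal cases (the notions genuinely differ for disconnected graphs, which is exactly the point of~\cite{DP14}), and the cases with isolated vertices attached, such as $P_1+S_{1,1,3}$ and $K_{1,3}+3P_1$, required new structural arguments there.

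On the unbounded side, your reductions for non-bipartite $H$, for $H$ containing a cycle (large-girth bipartite graphs), and for $H \supseteq_i K_{1,4}$ (maximum degree $3$, walls) are fine and standard. But the heart of this direction --- determining the minimal forests of maximum degree at most $3$ outside the list and exhibiting, for each, a family of unbounded clique-width avoiding it --- is entirely deferred (``I would determine \ldots and exhibit \ldots''), and your sample list is already wrong: $P_4+P_2$ \emph{is} an induced subgraph of $S_{1,2,3}$ (take the leaf, the centre, the two vertices of the length-$2$ branch, and the last two vertices of the length-$3$ branch), so it is a bounded case, not an obstruction. The genuine minimal obstructions are graphs such as $2P_3$, $3P_2$, $P_2+4P_1$, $P_7$ and $S_{2,2,2}$, and for the disconnected ones every family you have on the table (walls, subdivided walls, bounded-degree bipartite graphs) contains them, so completely different, much denser constructions are needed; this, together with verifying completeness of the obstruction list, is precisely where the proof of the theorem lives, and it is missing from your proposal.
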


\noindent
{\bf Our Results.} We consider subclasses of split graphs.
A graph $G=(V,E)$ is a {\em split graph} if it has a {\em split partition}, that is, a partition of~$V$ into two (possibly empty) sets~$K$ and~$I$, where~$K$ is a clique and~$I$ is an independent set. The class of split graphs coincides with the class of 
$(2K_2,C_4,C_5)$-free graphs~\cite{FH77} and is known to have unbounded clique-width~\cite{MR99}.
As with the previous graph classes, we forbid one additional induced subgraph~$H$. We aim  to classify the boundedness of clique-width for $H$-free split graphs and to identify new graph classes of bounded clique-width along the way.
Theorem~\ref{thm:chordal-classification} also provides motivation, as it would be useful to know whether or not the clique-width of $H$-free split graphs is bounded when $H=F_1$ or $H=F_2$ (the two missing cases for chordal graphs;
recall that chordal graphs form a superclass of split graphs).
We give affirmative answers for both of these cases.
It should be noted that,
for any graph~$H$ the class of $H$-free split graphs has bounded clique-width if
and only if the class of $\overline{H}$-free split graphs has bounded
clique-width
(see also Lemma~\ref{lem:complement}).
As such our main result  shows that there are only two open cases (see also \figurenames~\ref{fig:bounded-split} and~\ref{fig:open-split}).

\begin{theorem}\label{thm:split-classification}
Let~$H$ be a graph such that neither~$H$ nor~$\overline{H}$ is in $\{F_4,F_5\}$.
The class of $H$-free split graphs has bounded clique-width if and only if
\\[-15pt]
\begin{itemize}
\item [$\bullet$] $H$ or~$\overline{H}$ is isomorphic to~$rP_1$ for some $r \geq 1$;
\item [$\bullet$] $H$ or $\overline{H} \ssi \bull+P_1$;
\item [$\bullet$] $H$ or $\overline{H} \ssi F_1$; 
\item [$\bullet$] $H$ or $\overline{H} \ssi F_2$; 
\item [$\bullet$] $H$ or $\overline{H} \ssi F_3$;
\item [$\bullet$] $H$ or $\overline{H} \ssi Q$ or
\item [$\bullet$] $H$ or $\overline{H} \ssi K_{1,3}+2P_1$.
\end{itemize}
\end{theorem}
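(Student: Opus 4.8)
The plan is to prove Theorem~\ref{thm:split-classification} by splitting into a boundedness direction and an unboundedness direction, exploiting the self-complementary nature of the class of split graphs (Lemma~\ref{lem:complement}) so that we only ever need to argue about one of $H$, $\overline{H}$. First I would assemble the positive side: for each graph $H$ appearing in the list, show that $H$-free split graphs have bounded clique-width. Several of these follow immediately from results already available — the case $H \ssi \bull+P_1$ and the cases $H\ssi F_1$, $H\ssi F_2$ will be obtained from (or in parallel with) Theorem~\ref{thm:chordal-classification}, since split graphs form a subclass of chordal graphs and $F_1, F_2$ are precisely the two graphs left open there; indeed settling $F_1$ and $F_2$ for split graphs is one of the advertised contributions, so these will be self-contained structural arguments on split partitions. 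The remaining positive cases, namely $H\ssi F_3$, $H\ssi Q$, and $H\ssi K_{1,3}+2P_1$, are the "five new classes" promised in the abstract (counting the two complement-pairs $F_1/\overline{F_1}$-style plus these, or more precisely $F_1,F_2,F_3,Q,K_{1,3}+2P_1$ up to the equivalence), and for each I would give a direct proof of bounded clique-width: take a split graph $G$ with split partition $(K,I)$ that is $H$-free, classify the vertices of $I$ by their neighbourhoods in $K$, use the forbidden induced subgraph to bound the number of "types" or to impose a rigid structure (e.g. at most a constant number of vertices of $I$ have large degree into $K$, or the neighbourhoods are nested/laminar, or $K$ itself is small), and then exhibit a $k$-expression for a constant $k$ — typically via a bounded-size partition of $V(G)$ into sets each of which is a clique or independent set with a simple bipartite adjacency pattern between them, invoking the standard fact that such "bounded-size near-modular" decompositions give bounded clique-width.

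For the negative direction, I would verify that every graph $H$ with $H\notin\{F_4,F_5\}$, $\overline{H}\notin\{F_4,F_5\}$, and $H$ not on the list yields a class of $H$-free split graphs with unbounded clique-width. The strategy is the usual one: identify a small set of "minimal obstructions" — specific small graphs $H_1,\dots,H_t$ such that (a) if $H$ is not on the list and not in the excluded set then $H$ contains one of the $H_i$ as an induced subgraph, or $\overline{H}$ does, and (b) for each $H_i$ the class of $H_i$-free split graphs is already known, or can be shown, to have unbounded clique-width. Since the class of all split graphs has unbounded clique-width~\cite{MR99}, it suffices to exhibit, for each obstruction $H_i$, an explicit family of $H_i$-free split graphs of unbounded clique-width; these will be path- or wall-encodings of the kind referenced in the introduction, carefully arranged to avoid $H_i$ as an induced subgraph (the independent set $I$ can host long induced paths' worth of structure while the clique $K$ remains $H_i$-free by construction). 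The combinatorial bookkeeping here — enumerating the graphs $H$ that are not induced subgraphs of anything on the list, and checking each contains an obstruction — is finite but tedious, and is where most of the case analysis lives; I would organize it by the number of vertices of $H$ and by whether $H$ or $\overline{H}$ is the "smaller" side.

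The main obstacle, I expect, is the positive side for the genuinely new classes $F_3$, $Q$, and especially the resolution of $F_1$ and $F_2$: unlike the generic split-graph arguments, these graphs are "just barely" restrictive, so the $H$-free condition does not collapse the structure to something trivial, and one must find the right invariant — I anticipate arguing that in an $F_i$-free split graph the vertices of $I$ of degree at least $2$ (or at most $|K|-2$) fall into a bounded number of neighbourhood-equivalence classes after deleting a bounded number of exceptional vertices, and then that the exceptional vertices together with the class representatives induce a bounded-clique-width "core" to which the rest attaches by a bounded number of join/co-join operations. Making that precise, and in particular ruling out a long "staircase" of neighbourhoods (which is exactly what would encode a wall and destroy boundedness) using only the absence of $F_i$, is the technical heart of the proof. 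The unboundedness side, by contrast, is mostly a matter of packaging known constructions, though one must be attentive that the excluded pair $F_4,F_5$ is exactly the boundary — any constructed family must contain $F_4$ or $F_5$ (or their complements) as induced subgraphs, consistent with those two cases being left open rather than resolved negatively.
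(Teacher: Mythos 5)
Your outline of the sufficiency direction is broadly in the right spirit (structural arguments on a fixed split partition, classifying vertices of $I$ by their neighbourhoods in $K$), but it contains one concrete error and misses the paper's actual mechanism. The error: the case $H\ssi \bull+P_1$ cannot be ``obtained from'' Theorem~\ref{thm:chordal-classification}, because neither $\bull+P_1$ nor its complement appears in that theorem's list, so the class of $(\bull+P_1)$-free \emph{chordal} graphs has unbounded clique-width; boundedness for split graphs is genuinely new and is proved by complementing the clique $K$ to turn $G$ into a bipartite graph and invoking the classification of weakly $H^\ell$-free bipartite graphs (Theorem~\ref{thm:bipweak}). The same device handles $Q$, and it also serves as the fallback inside the proofs for $F_1$, $F_2$ and $F_3$: there one reduces to prime graphs, and either $G$ lacks many vertex-disjoint ``special'' bulls/darts/copies of $Q$ (so a bounded number of vertex deletions makes $G$ $Q$-free, which is already bounded), or the many disjoint copies force a non-trivial module, contradicting primality. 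Your proposed invariant (boundedly many neighbourhood classes after deleting boundedly many exceptional vertices) is plausible in outline but is not carried out, and the ``staircase'' obstruction you worry about is exactly what the primality/module argument disposes of.

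The genuine gap is in the necessity direction. The paper does \emph{not} build explicit unbounded families of $H_i$-free split graphs for a list of minimal obstructions; it reduces necessity to the already-proved classification of weakly $H^\ell$-free bipartite graphs. Concretely, complementing the clique side of a split partition of an $H$-free split graph yields a labelled bipartite graph avoiding a corresponding labelled graph $H^\ell_b$, so boundedness for $H$-free split graphs forces $H^\ell_b$ to be one of the labelled graphs allowed by Theorem~\ref{thm:bipweak}, which translates back into $H$ or $\overline{H}$ being complete, edgeless, or an induced subgraph of $F_4$ or $F_5$; the final step is just enumerating the six-vertex induced subgraphs of $F_4$ and $F_5$. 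The technical heart of this reduction --- entirely absent from your plan --- is that a split graph (and $H$ itself) may have several non-isomorphic split partitions, so the correspondence between ``$H$-free split'' and ``weakly $H^\ell_b$-free bipartite'' is not immediate; the Key Lemma shows any two split partitions of $H$ differ by moving at most one vertex between $K$ and $I$, and uses this to control which labelled bipartite graph must be excluded. Without this (or without actually producing your obstruction list and the $H_i$-free unbounded constructions, none of which you specify), your ``finite but tedious bookkeeping'' has nothing to check against, so the necessity direction as proposed is not a proof.
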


\begin{figure}
\begin{center}
\begin{tabular}{ccc}
\begin{minipage}{0.25\textwidth}
\centering
\scalebox{0.75}{
{\begin{tikzpicture}[scale=1,rotate=135]
\GraphInit[vstyle=Simple]
\SetVertexSimple[MinSize=6pt]
\Vertices{circle}{a,b,c,d}
\end{tikzpicture}}}
\end{minipage}
&
\begin{minipage}{0.25\textwidth}
\centering
\scalebox{0.75}{
{\begin{tikzpicture}[scale=1,rotate=90]
\GraphInit[vstyle=Simple]
\SetVertexSimple[MinSize=6pt]
\Vertex[x=0,y=0]{a}
\Vertex[a=30,d=1]{b}
\Vertex[a=30,d=2]{c}
\Vertex[a=-30,d=1]{d}
\Vertex[a=-30,d=2]{e}
\Vertex[x=1.73205080757,y=0]{x}
\Edges(c,b,a,d,e)
\Edges(b,d)
\end{tikzpicture}}}
\end{minipage}
&
\begin{minipage}{0.25\textwidth}
\centering
\scalebox{0.75}{
{\begin{tikzpicture}[scale=1,rotate=135]
\GraphInit[vstyle=Simple]
\SetVertexSimple[MinSize=6pt]
\Vertices{circle}{a,b,c,d}
\Vertex[a=45,d=1.57313218497]{e}
\Vertex[a=225,d=1.57313218497]{f}
\Edges(a,b,c,d,a,c)
\Edges(b,d)
\Edges(a,e)
\Edges(f,d)
\end{tikzpicture}}}
\end{minipage}\\
\\
$rP_1$~for~$r=\nobreak 4$ & $\bull+\nobreak P_1$ & $F_1$\\
\\
\\
\end{tabular}
\begin{tabular}{cccc}
\begin{minipage}{0.24\textwidth}
\centering
\scalebox{0.75}{
{\begin{tikzpicture}[scale=1,rotate=135]
\GraphInit[vstyle=Simple]
\SetVertexSimple[MinSize=6pt]
\Vertices{circle}{a,b,c,d}
\Vertex[a=45,d=1.57313218497]{e}
\Vertex[a=225,d=1.57313218497]{f}
\Edges(a,b,c,d,a,c)
\Edges(b,d)
\Edges(a,e)
\Edges(c,f,d)
\end{tikzpicture}}}
\end{minipage}
&
\begin{minipage}{0.24\textwidth}
\centering
\scalebox{0.75}{
{\begin{tikzpicture}[xscale=-1,rotate=135]
\GraphInit[vstyle=Simple]
\SetVertexSimple[MinSize=6pt]
\Vertices{circle}{a,b,c,d}
\Vertex[x=1,y=2]{y}
\Vertex[x=2,y=1]{z}
\Edges(a,b,c,d,a,c)
\Edges(b,d)
\Edges(a,z,b,y)
\end{tikzpicture}}}
\end{minipage}
&
\begin{minipage}{0.24\textwidth}
\centering
\scalebox{0.75}{
{\begin{tikzpicture}[scale=1,rotate=30]
\GraphInit[vstyle=Simple]
\SetVertexSimple[MinSize=6pt]
\Vertex[a=0,d=0.57735026919]{a}
\Vertex[a=120,d=0.57735026919]{b}
\Vertex[a=240,d=0.57735026919]{c}
\Vertex[a=60,d=1.15470053838]{d}
\Vertex[a=180,d=1.15470053838]{e}
\Vertex[a=300,d=1.15470053838]{f}
\Edges(a,b,c,a,d,b,e)
\Edge(c)(f)
\end{tikzpicture}}}
\end{minipage}
&
\begin{minipage}{0.24\textwidth}
\centering
\scalebox{0.75}{
{\begin{tikzpicture}[scale=1]
\GraphInit[vstyle=Simple]
\SetVertexSimple[MinSize=6pt]
\Vertex[x=0,y=1.4142135623]{a}
\Vertex[x=-0.70710678118,y=0]{b}
\Vertex[x=0,y=0]{c}
\Vertex[x=0.70710678118,y=0]{d}
\Vertex[x=-0.70710678118,y=1.4142135623]{e}
\Vertex[x=0.70710678118,y=1.4142135623]{f}
\Edges(c,a,b)
\Edges(a,d)
\end{tikzpicture}}}
\end{minipage}
\\
& & \\
$F_2$ & $F_3$ & $Q$ & $K_{1,3}+\nobreak 2P_1$ 
\end{tabular}
\end{center}
\caption{The graphs~$H$ from Theorem~\ref{thm:split-classification} for which the classes of $H$-free split graphs and $\overline{H}$-free split graphs have bounded clique-width.}
\label{fig:bounded-split}
\end{figure}
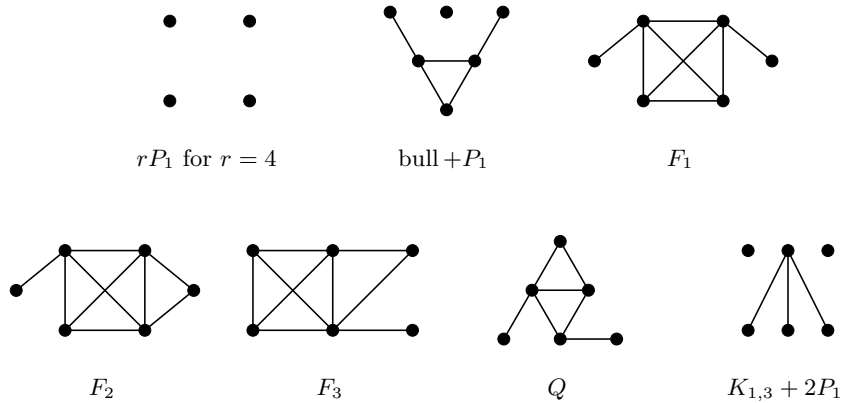
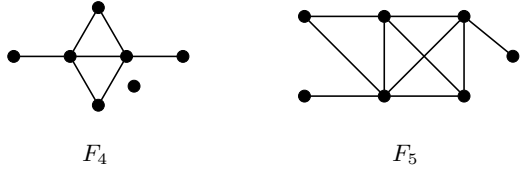
\begin{figure}
\begin{center}
\begin{tabular}{cc}
\begin{minipage}{0.33\textwidth}
\centering
\scalebox{0.75}{
{\begin{tikzpicture}[scale=1]
\GraphInit[vstyle=Simple]
\SetVertexSimple[MinSize=6pt]
\Vertex[x=0,y=0]{a}
\Vertex[a=60,d=1]{b}
\Vertex[a=-60,d=1]{c}
\Vertex[a=0,d=1]{d}
\Vertex[a=0,d=2]{x}
\Vertex[a=0,d=-1]{y}
\Vertex[a=-25,d=1.25]{z}
\Edges(y,a,b,d,c,a,d,x)
\end{tikzpicture}}}
\end{minipage}
&
\begin{minipage}{0.33\textwidth}
\centering
\scalebox{0.75}{
{\begin{tikzpicture}[scale=1,rotate=135]
\GraphInit[vstyle=Simple]
\SetVertexSimple[MinSize=6pt]
\Vertices{circle}{a,b,c,d}
\Vertex[a=225,d=1.57313218497]{f}
\Vertex[x=1,y=2]{y}
\Vertex[x=2,y=1]{z}
\Edges(a,b,c,d,a,c)
\Edges(b,d)
\Edges(a,z,b,y)
\Edges(f,d)
\end{tikzpicture}}}
\end{minipage}\\
& \\
$F_4$ & $F_5$
\end{tabular}
\end{center}
\caption{\label{fig:open-split} The (only) two graphs for which it is not known whether or not the classes of $H$-free split graphs and $\overline{H}$-free split graphs have bounded clique-width.}
\end{figure}

In Section~\ref{s-bounded} we prove each of the bounded cases in Theorem~\ref{thm:split-classification}. These proofs use results from the literature, which
we state in Section~\ref{sec:prelim}, together with some other preliminaries. In particular, we will 
exploit the close relationship between $H$-free split graphs and so-called weakly $H^\ell$-free bipartite graphs (see the next
section for a definition).
This enables us to apply Theorem~\ref{thm:bipweak} (a variant of Theorem~\ref{thm:bipartite}; both these theorems were proved in~\cite{DP14}) after first transforming a split graph into a bipartite graph by removing the edges of the clique
(this has to be done carefully, as a graph may have multiple split partitions).

In Section~\ref{s-classification} we prove Theorem~\ref{thm:split-classification}. We show that if the class of $H$-free split graphs
has bounded clique-width then~$H$ or~$\overline{H}$ must be an independent set
or an induced subgraph of~$F_4$ or~$F_5$. Both of these graphs have seven
vertices. The six-vertex induced subgraphs of~$F_4$ are: 
$\bull +\nobreak P_1, \overline{F_1}, \overline{F_3}$ and $K_{1,3}+\nobreak 2P_1$.
The six-vertex induced subgraphs
of~$F_5$ are: $\bull +\nobreak P_1, F_1,F_2,\overline{F_2},F_3,\overline{F_3}$ and $Q$.
These graphs and their complements are precisely the cases listed in Theorem~\ref{thm:split-classification} (and for which we prove boundedness in Section~\ref{s-bounded}).
Hence, we can also formulate our main theorem as follows. 

\medskip
\noindent
\faketheorem{Theorem~\ref{thm:split-classification} (alternative formulation).}
{\em
Let~$H$ be a graph such that neither~$H$ nor~$\overline{H}$ is in $\{F_4,F_5\}$.
The class of $H$-free split graphs has bounded clique-width if and only if
\\[-15pt]
\begin{itemize}
\item [$\bullet$] $H$ or~$\overline{H}$ is isomorphic to~$rP_1$ for some $r \geq 1$;
\item [$\bullet$] $H$ or $\overline{H} \ssi F_4$ or
\item [$\bullet$] $H$ or $\overline{H} \ssi F_5$.
\end{itemize}
}

\section{Preliminaries}\label{sec:prelim}

We only consider graphs that are finite, undirected and have neither multiple edges nor self-loops.
In this section we define some more graph terminology, additional notation and 
give some known lemmas from the literature that we will need to prove our results. We refer to the textbook of Diestel~\cite{Di12} for any undefined terminology.

Let $G=(V,E)$ be a graph.
The set $N(u)=\{v\in V\; |\; uv\in E\}$ is the {\em neighbourhood} of $u\in V$.
The {\em degree} of a vertex $u\in V$ in~$G$ is the size~$|N(u)|$ of its neighbourhood.
Let~$S,T\subseteq V$ with $S\cap T=\emptyset$. 
Then~$S$ is {\em complete} to~$T$ if every vertex in~$S$ is adjacent to every vertex in~$T$, and~$S$ is {\em anti-complete} to~$T$ if every vertex in~$S$ is non-adjacent to every vertex in~$T$.
Similarly, a vertex $v\in V\setminus T$ is {\em complete} or {\em anti-complete} to~$T$ if it is adjacent or non-adjacent, respectively, to every vertex of~$T$.
A set~$M$ of vertices is a {\em module} if every vertex not in~$M$ is either
complete or anti-complete to~$M$. A module of~$G$ is {\em trivial} if it
contains zero, one or all vertices of~$G$, otherwise it is {\em non-trivial}. 
A graph~$G$ is {\em prime} if every module in~$G$ is trivial.
We say that a vertex~$v$ {\em distinguishes} two vertices~$x$ and~$y$ if~$v$ is adjacent to precisely one of~$x$ and~$y$. Note that if a set~$M \subseteq V$ is not a module then there must be vertices $x,y \in M$ and a vertex $v \in V \setminus M$ 
such that~$v$ 
distinguishes~$x$ and~$y$.

In a partially ordered set $({\cal P},\leq)$, two elements $p,q \in {\cal P}$ are {\em comparable} if $p \leq q$ or $q \leq p$, otherwise they are {\em incomparable}.
A set $X \subseteq {\cal P}$ is a {\em chain} if the elements of~$X$ are pairwise comparable.

\subsection{Clique-Width}
The {\em clique-width} of a graph~$G$, denoted~$\cw(G)$, is the minimum
number of labels needed to
construct~$G$ by
using the following four operations:
\begin{enumerate}
\item creating a new graph consisting of a single vertex~$v$ with label~$i$;
\item taking the disjoint union of two labelled graphs~$G_1$ and~$G_2$;
\item joining each vertex with label~$i$ to each vertex with label~$j$ ($i\neq j$);
\item renaming label~$i$ to~$j$.
\end{enumerate}

A class of graphs~${\cal G}$ has \emph{bounded} clique-width if
there is a constant~$c$ such that the clique-width of every graph in~${\cal G}$ is at most~$c$; otherwise the clique-width of~${\cal G}$ is
{\em unbounded}.

Let~$G$ be a graph. We define the following operations.
For an induced subgraph $G'\ssi G$, the {\em subgraph complementation} operation (acting on~$G$ with respect to~$G'$) replaces every edge present in~$G'$
by a non-edge, and vice versa. Similarly, for two disjoint vertex subsets~$S$ and~$T$ in~$G$, the {\em bipartite complementation} operation with respect to~$S$ and~$T$ acts on~$G$ by replacing
every edge with one end-vertex in~$S$ and the other one in~$T$ by a non-edge and vice versa.

We now state some useful facts about how the above operations (and some other ones) influence the clique-width of a graph.
We will use these facts throughout the paper.
Let $k\geq 0$ be a constant and let~$\gamma$ be some graph operation.
We say that a graph class~${\cal G'}$ is {\em $(k,\gamma)$-obtained} from a graph class~${\cal G}$
if the following two conditions hold:
\begin{enumeratei}
\item every graph in~${\cal G'}$ is obtained from a graph in~${\cal G}$ by performing~$\gamma$ at most~$k$ times, and
\item for every $G\in {\cal G}$ there exists at least one graph
in~${\cal G'}$ obtained from~$G$ by performing~$\gamma$ at most~$k$ times.
\end{enumeratei}

We say that~$\gamma$ {\em preserves} boundedness of clique-width if
for any finite constant~$k$ and any graph class~${\cal G}$, any graph class~${\cal G}'$ that is $(k,\gamma)$-obtained from~${\cal G}$
has bounded clique-width if and only if~${\cal G}$ has bounded clique-width.

\newpage
\begin{enumerate}[\bf F{a}ct 1.]
\item \label{fact:del-vert} Vertex deletion preserves boundedness of clique-width~\cite{LR04}.\\[-1em]

\item \label{fact:comp} Subgraph complementation preserves boundedness of clique-width~\cite{KLM09}.\\[-1em]

\item \label{fact:bip} Bipartite complementation preserves boundedness of clique-width~\cite{KLM09}.\\[-1em]

\end{enumerate}

Combining the fact that the complement of any split graph is split with Fact~\ref{fact:comp} leads to the following lemma.

\begin{lemma}\label{lem:complement}
For any graph $H$, the class of $H$-free split graphs has bounded clique-width if and only if the class of $\overline{H}$-free
split graphs has bounded clique-width.
\end{lemma}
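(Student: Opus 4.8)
The plan is to derive the lemma directly from two facts already in place: first, that the class of split graphs is closed under taking complements, and second, Fact~\ref{fact:comp}, which says subgraph complementation preserves boundedness of clique-width. First I would recall why split graphs are self-complementary as a class: if $G=(V,E)$ has split partition $(K,I)$ with $K$ a clique and $I$ an independent set, then in $\overline{G}$ the set $K$ becomes an independent set and $I$ becomes a clique, so $(I,K)$ is a split partition of $\overline{G}$; hence $\overline{G}$ is a split graph. (Equivalently, one may invoke the characterisation of split graphs as the $(2K_2,C_4,C_5)$-free graphs, noting that this family is closed under complementation since $\overline{2K_2}=C_4$, $\overline{C_4}=2K_2$ and $\overline{C_5}=C_5$.)

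Next I would observe the elementary fact that an induced subgraph containment is preserved under taking complements of both graphs: $H\ssi G$ if and only if $\overline{H}\ssi\overline{G}$. Consequently, a split graph $G$ is $H$-free if and only if $\overline{G}$ is $\overline{H}$-free. Therefore the map $G\mapsto\overline{G}$ is a bijection between the class of $H$-free split graphs and the class of $\overline{H}$-free split graphs.

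Finally I would invoke Fact~\ref{fact:comp}. Global complementation $G\mapsto\overline{G}$ is exactly the subgraph complementation operation performed with respect to $G'=G$ itself, i.e.\ it is a single application of subgraph complementation. Since the class of $\overline{H}$-free split graphs is $(1,\text{subgraph complementation})$-obtained from the class of $H$-free split graphs (every $\overline{H}$-free split graph arises this way from its complement, and every $H$-free split graph yields a $\overline{H}$-free split graph), Fact~\ref{fact:comp} gives that one class has bounded clique-width if and only if the other does. This completes the argument.

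There is essentially no obstacle here; the only point requiring a moment's care is verifying the self-complementarity of the split-graph class and checking that the $(k,\gamma)$-obtained conditions (i) and (ii) are both met for $k=1$ and $\gamma$ the subgraph complementation operation, which is immediate from the bijectivity noted above.
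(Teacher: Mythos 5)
Your proposal is correct and follows exactly the same route as the paper, which proves the lemma in one line by combining the closure of split graphs under complementation with Fact~\ref{fact:comp}; you have merely spelled out the details (the bijection $G\mapsto\overline{G}$ and the verification of the $(1,\gamma)$-obtained conditions) that the paper leaves implicit.
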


We will also need the following two results.

\begin{lemma}[\cite{CO00}]\label{lem:prime}
 If~${\cal P}$ is the set of all prime induced subgraphs of a graph~$G$ then $\cw(G)=\max_{H \in {\cal P}}\cw(H)$.
\end{lemma}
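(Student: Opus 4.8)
The plan is to prove the two inequalities $\cw(G) \geq \max_{H \in \mathcal{P}} \cw(H)$ and $\cw(G) \leq \max_{H \in \mathcal{P}} \cw(H)$ separately. The first is the easy direction and amounts to the fact that clique-width does not increase under taking induced subgraphs: given an optimal $\cw(G)$-expression and an induced subgraph $H \ssi G$, one restricts the labelled graph produced at every node of the expression to $V(H)$, discarding the creation steps of vertices outside $V(H)$; the disjoint-union, join and relabel operations remain valid and build $H$ with at most $\cw(G)$ labels (cf.\ Fact~\ref{fact:del-vert}). Hence $\cw(H) \leq \cw(G)$ for every $H \in \mathcal{P}$, which gives the first inequality.

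The second (and main) inequality I would prove by induction on $|V(G)|$ via modular decomposition. If $G$ is prime (in particular if $|V(G)|\le 1$), then $G \in \mathcal{P}$ and there is nothing to prove. Otherwise, by Gallai's modular decomposition theorem, $G = R[G_1,\dots,G_n]$ for some $n \geq 2$, where $R$ is the quotient of $G$ by its maximal strong modules $M_1,\dots,M_n$ and is either edgeless, or complete, or prime, and each $G_i = G[M_i]$ satisfies $|V(G_i)| < |V(G)|$. Choosing one vertex from each $M_i$ exhibits $R$ as an induced subgraph of $G$: if $R$ is prime then $R \in \mathcal{P}$; if $R$ is edgeless then $\cw(R)=1$; and if $R$ is complete then $\cw(R)=2$ while $G$ contains an induced $K_2$, which is prime. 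In every case $\cw(R) \leq \max_{H \in \mathcal{P}} \cw(H)$. Moreover every prime induced subgraph of $G_i$ is a prime induced subgraph of $G$, so the induction hypothesis gives $\cw(G_i) \leq \max_{H \in \mathcal{P}} \cw(H)$ for all $i$.

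It then remains to combine these bounds through a \emph{substitution lemma}: for any graph $R$ on vertices $v_1,\dots,v_n$ and any graphs $G_1,\dots,G_n$,
\[
\cw\bigl(R[G_1,\dots,G_n]\bigr) \;\leq\; \max\{\cw(R),\cw(G_1),\dots,\cw(G_n)\}.
\]
To prove this, set $k$ equal to the right-hand side, take a $k$-expression for $R$, and in it replace each leaf that creates a vertex $v_i$ with label $\ell_i$ by a $k$-expression building $G_i$ from scratch (re-using the same label set of size $k$), followed by one operation renaming every label occurring in this copy of $G_i$ to $\ell_i$. Since all vertices of the inserted copy of $M_i$ then share the label $\ell_i$, every subsequent join and relabel of the original expression treats them exactly as it treated $v_i$, so a vertex of $M_i$ becomes adjacent to a vertex of $M_j$ precisely when $v_iv_j \in E(R)$; the internal edges of each $G_i$ are produced inside its grafted subtree, and distinct grafted subtrees (being at the bottom of the tree and built from scratch) neither interfere with one another nor with the rest of the expression. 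Applying this with $R,G_1,\dots,G_n$ as above yields $\cw(G) \leq \max_{H\in\mathcal{P}}\cw(H)$, closing the induction. The only genuinely delicate point is the verification of the substitution lemma — in particular checking that the label bookkeeping inside the grafted subtrees causes no clashes and that the adjacencies across modules come out exactly right — while the modular-decomposition case analysis and the monotonicity direction are routine.
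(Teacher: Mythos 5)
Your proof is correct, but note that the paper itself offers no proof of this lemma at all: it is quoted verbatim as a known result of Courcelle and Olariu~\cite{CO00}. What you have written is essentially a self-contained reconstruction of the standard argument behind that citation, and it holds up. The easy direction (monotonicity of clique-width under induced subgraphs) is fine; the main direction correctly combines Gallai's modular decomposition theorem with the substitution bound $\cw(R[G_1,\dots,G_n]) \leq \max\{\cw(R),\cw(G_1),\dots,\cw(G_n)\}$, and your grafting construction does verify it: once a copy of $G_i$ is collapsed to the single label $\ell_i$, all later joins and renamings treat $M_i$ exactly as they treated $v_i$, joins never create edges inside a monochromatic set, and the labels inside a grafted subtree are local to it, so no clashes arise. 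Two small points worth tightening: ``one operation renaming every label occurring in this copy of $G_i$ to $\ell_i$'' is really up to $k-1$ successive renaming operations, since renaming acts on one label at a time; and your appeal to $K_2$ being prime (in the case where the quotient $R$ is complete) is legitimate precisely because the paper's definition counts the whole vertex set as a trivial module, so graphs on at most two vertices are prime --- under the more restrictive convention that prime graphs have at least three (or four) vertices one would instead argue directly that a complete quotient contributes clique-width at most $2\leq\cw(G)$ whenever $G$ has an edge. In short, the proposal is a valid proof of a result the paper only cites, and it buys self-containedness at the cost of importing Gallai's theorem and the bookkeeping of the substitution lemma, which is exactly where you correctly identified the delicacy to lie.
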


\begin{lemma}[\cite{MR99}]\label{lem:split-are-unbdd}
The class of split graphs has unbounded clique-width.
\end{lemma}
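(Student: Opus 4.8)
The plan is to reduce to a class of bipartite graphs that is already known to have unbounded clique-width, and then to pass from bipartite graphs to split graphs by a single subgraph complementation, invoking Fact~\ref{fact:comp}. Concretely, I would take the $k\times k$ grids $\Gamma_k$, with vertex set $\{(a,b)\mid 1\le a,b\le k\}$ and an edge between two vertices exactly when their $\ell_1$-distance is~$1$. It is a classical fact (see~\cite{MR99}) that $\{\Gamma_k\mid k\ge 1\}$ has unbounded clique-width, and each $\Gamma_k$ is bipartite: $2$-colouring $(a,b)$ by the parity of $a+b$ yields a bipartition $(A_k,B_k)$.

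Next I would let $\Gamma_k^{\ast}$ be the graph obtained from $\Gamma_k$ by adding every edge between two vertices of $A_k$. Then in $\Gamma_k^{\ast}$ the set $A_k$ is a clique and $B_k$ is an independent set, so $\Gamma_k^{\ast}$ is a split graph. The key observation is that $\Gamma_k$ is obtained from $\Gamma_k^{\ast}$ by exactly one subgraph complementation, namely with respect to the induced subgraph $\Gamma_k^{\ast}[A_k]$ (a clique, whose complementation turns $A_k$ back into an independent set), and conversely $\Gamma_k^{\ast}$ is obtained from $\Gamma_k$ by one subgraph complementation with respect to $\Gamma_k[A_k]$. Hence the class $\{\Gamma_k^{\ast}\mid k\ge 1\}$ is $(1,\gamma)$-obtained from $\{\Gamma_k\mid k\ge 1\}$, where $\gamma$ denotes subgraph complementation.

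Then I would apply Fact~\ref{fact:comp}: since subgraph complementation preserves boundedness of clique-width, $\{\Gamma_k^{\ast}\mid k\ge 1\}$ has bounded clique-width if and only if $\{\Gamma_k\mid k\ge 1\}$ does. As the grids have unbounded clique-width, so do the $\Gamma_k^{\ast}$; and since every $\Gamma_k^{\ast}$ is a split graph, the class of split graphs has unbounded clique-width.

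The only part that is not pure bookkeeping is the unboundedness of the clique-width of grids, which I would not reprove here — it is the standard lower bound obtained by a fooling-set type argument on clique-width expressions for $\Gamma_k$. Equivalently, one could simply note that the class of all bipartite graphs has unbounded clique-width (it contains the grids) and run the same one-step subgraph-complementation reduction; this is in essence the argument of~\cite{MR99}.
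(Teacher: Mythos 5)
Your argument is correct. Note that the paper does not actually prove Lemma~\ref{lem:split-are-unbdd}; it imports the statement from~\cite{MR99}, so there is no in-paper proof to compare against. Your reduction is sound in all its steps: the grids are bipartite, turning one colour class into a clique is a single subgraph complementation in each direction, the two conditions in the definition of a $(1,\gamma)$-obtained class are satisfied for the pair of classes $\{\Gamma_k\}$ and $\{\Gamma_k^{\ast}\}$, Fact~\ref{fact:comp} then transfers unboundedness, and a class containing a subclass of unbounded clique-width itself has unbounded clique-width. This is also exactly the ``complement the clique side of a split partition to get a bipartite graph'' correspondence that the paper exploits in the other direction in Theorem~\ref{thm:split-bounded-from-weakly-bip} and Lemma~\ref{lem:split-f4-f5-lemma}, so your proof fits the paper's toolkit well. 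The only point worth tidying is bibliographic: the fact that the $k\times k$ grids have unbounded (indeed, exactly $k+1$ for $k\geq 3$) clique-width is due to Golumbic and Rotics~\cite{GR99b} rather than to~\cite{MR99}; the latter establishes unboundedness for split graphs via a different explicit family, but your grid-based substitute, combined with Fact~\ref{fact:comp}, works just as well.
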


\subsection{Bipartite Graphs}
A graph is {\em bipartite} if its vertex set can be partitioned into two (possibly empty) independent sets.
Let~$H$ be a bipartite graph.
A \emph{black-and-white labelling}~$\ell$ of~$H$ is a labelling 
that assigns either the colour ``black'' or the colour ``white''  to each vertex of~$H$ in such a way that the
two resulting monochromatic colour classes~$B_H^\ell$ and~$W_H^\ell$ form a {\em bipartition} of~$V_H$ into two (possibly empty) independent sets. 
We say that~$H$ is a \emph{labelled} bipartite graph if we are also given a fixed black-and-white labelling.
We denote a graph~$H$ with such a labelling~$\ell$ by $H^\ell=(B_H^\ell,W_H^\ell,E_H)$. 
It is important to note that the pair $(B_H^\ell,W_H^\ell)$ is {\it ordered}, that is, $(B_H^\ell,W_H^\ell,E_H)$ and $(W_H^\ell,B_H^\ell,E_H)$ are different labelled 
bipartite graphs. Two labelled bipartite graphs~$H_1^{\ell_1}$ and~$H_2^{\ell_2}$ are \emph{isomorphic} if the following two conditions hold:
\begin{enumerate}[(i)]
\item the (unlabelled) graphs~$H_1$ and~$H_2$ are isomorphic, and
\item there exists an isomorphism $f: V_{H_1}\to V_{H_2}$ such that for all $u\in V_{H_1}$, it holds that
$u\in W^{\ell_1}_{H_1}$ if and only if $f(u)\in W^{\ell_2}_{H_2}$.
\end{enumerate}
Moreover, in this case~$\ell_1$ and~$\ell_2$ are said to be {\em isomorphic} labellings.
We write $H_1^{\ell_1} \li H_2^{\ell_2}$ if $H_1\ssi H_2$, $B_{H_1}^{\ell_1}\subseteq B_{H_2}^{\ell_2}$ and $W_{H_1}^{\ell_1}\subseteq W_{H_2}^{\ell_2}$.
In this case we say that~$H_1^{\ell_1}$ is a \emph{labelled} induced subgraph of~$H_2^{\ell_2}$. 
Note that the two labelled bipartite graphs~$H_1^{\ell_1}$ and~$H_2^{\ell_2}$ are isomorphic if and only if~$H_1^{\ell_1}$ is a labelled induced subgraph of~$H_2^{\ell_2}$, and vice versa. 

If~$H$ is a bipartite graph with a labelling~$\ell$, we let~$\overline{\ell}$ denote the ``opposite'' labelling labelling to~$\ell$, namely the labelling obtained from~$\ell$ by reversing the colours.
If~$H$ is a bipartite graph with the property that among all its black-and-white labellings, all those that maximize the number of black vertices are isomorphic, then we pick one such labelling and call 
it~$b$. 
If such a unique labelling~$b$ does exist, we let~$\overline{b}$ denote the opposite labelling to~$b$.

Let~$G$ be an (unlabelled) bipartite graph, and let~$H^\ell$ be a labelled bipartite graph. 
Then~$G$ is {\em weakly}
$H^\ell$-free if there is a labelling~$\ell^*$ of~$G$ such that~$G^{\ell^*}$
does not contain~$H^\ell$ as a labelled induced subgraph.
Similarly, let $\{H_1^{\ell_1},\ldots, H_p^{\ell_p}\}$ be a set of labelled bipartite graphs. Then~$G$ is {\it weakly $(H_1^{\ell_1},\ldots, H_p^{\ell_p})$-free} if there is a labelling~$\ell^*$ of~$G$ such that~$G^{\ell^*}$ does not contain any graph in $\{H_1^{\ell_1},\ldots, H_p^{\ell_p}\}$ as a labelled induced subgraph.

\medskip
\noindent
{\em Example.} The two non-isomorphic labelled bipartite graphs corresponding to~$P_1$ are shown in \figurename~\ref{fig:p1}.
Every edgeless graph is weakly $P_1^b$-free and weakly $P_1^{\overline{b}}$-free (simply label all the vertices white or all the vertices black, respectively).
However, if a bipartite graph is weakly $(P_1^b,P_1^{\overline{b}})$-free then it cannot contain any vertices.
Hence, a bipartite graph can be weakly $H_1^{\ell_1}$-free,$\ldots$, weakly $H_p^{\ell_p}$-free, while not being weakly $(H_1^{\ell_1},\ldots, H_p^{\ell_p})$-free.

\begin{figure}
\begin{center}
\begin{tabular}{cc}
\begin{minipage}{0.3\textwidth}
\centering
\begin{tikzpicture}[scale=0.4]
\GraphInit[vstyle=Simple]
\SetVertexSimple[MinSize=6pt]
\Vertex[x=0,y=0]{x}
\end{tikzpicture}
\end{minipage}
&
\begin{minipage}{0.3\textwidth}
\centering
\begin{tikzpicture}[scale=0.4]
\GraphInit[vstyle=Simple]
\SetVertexSimple[MinSize=6pt,FillColor=white]
\Vertex[x=0,y=0]{x}
\end{tikzpicture}
\end{minipage}\\
\\
$P_1^b$ & $P_1^{\overline{b}}$
\end{tabular}
\caption{The two pairwise non-isomorphic labellings of~$P_1$.}\label{fig:p1}
\end{center}
\end{figure}
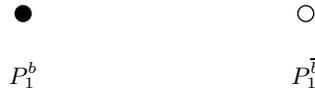

\medskip
\noindent
For a more in-depth discussion of weakly $H^\ell$-free bipartite graphs we refer to~\cite{DP14}. 
In this paper we will make use of the following theorem (see also \figurename~\ref{fig:bip-weakly}).

\begin{theorem}[\cite{DP14}]\label{thm:bipweak}
Let $H^\ell$ be a labelled bipartite graph.
The class of weakly $H^\ell$-free bipartite graphs has bounded clique-width if and only if one of the following cases holds:
\begin{itemize}
\item [$\bullet$] $H^\ell$ or $H^{\overline{\ell}} = (sP_1)^b$ for some $s\geq 1$;
\item [$\bullet$] $H^\ell$ or $H^{\overline{\ell}} \li (P_1+P_5)^b$;
\item [$\bullet$]  $H^\ell \hspace*{1mm} \li (P_2+P_4)^b$ or
\item [$\bullet$] $H^\ell \hspace*{1mm} \li (P_6)^b$.
\end{itemize}
\end{theorem}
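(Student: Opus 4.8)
The plan is to prove the two directions of the equivalence separately, using the unlabelled classification for $H$-free bipartite graphs (Theorem~\ref{thm:bipartite}) to keep the number of cases that must be handled by hand finite and small.

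For the ``only if'' direction, assume that $H^\ell$ satisfies none of the four listed conditions; we must show that weakly $H^\ell$-free bipartite graphs have unbounded clique-width. The starting point is that weakly $H^\ell$-freeness is a weaker requirement than $H$-freeness (forbidding $H$ under a single labelling is implied by forbidding it under every labelling), so the class of $H$-free bipartite graphs is contained in the class of weakly $H^\ell$-free bipartite graphs. Hence, by Theorem~\ref{thm:bipartite}, we are done unless the unlabelled graph $H$ is an independent set or an induced subgraph of $K_{1,3}+3P_1$, $K_{1,3}+P_2$, $P_1+S_{1,1,3}$ or $S_{1,2,3}$. This leaves only finitely many graphs $H$, each on at most seven vertices, and for each of them only finitely many black-and-white labellings up to isomorphism; moreover the classes of weakly $H^\ell$-free and weakly $H^{\overline{\ell}}$-free bipartite graphs coincide (swap every label), so we may work up to colour-reversal. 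For each pair $(H,\ell)$ not covered by the four conditions the task is then to produce an explicit infinite family of bipartite graphs of unbounded clique-width, together with one fixed bipartition-labelling under which no member of the family contains $H^\ell$ as a labelled induced subgraph. The natural source of such families is the standard toolkit for proving unboundedness of clique-width --- suitably chosen walls, their line graphs, or grid-like bipartite graphs --- equipped with a ``striped'' $2$-colouring, after which one only has to verify a finite local condition ruling out $H^\ell$.

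For the ``if'' direction we treat the four families in turn, using Lemma~\ref{lem:prime} to reduce to prime graphs and Facts~\ref{fact:del-vert}--\ref{fact:bip} to simplify along the way. If $H^\ell$ or $H^{\overline{\ell}}$ equals $(sP_1)^b$, then being weakly $H^\ell$-free simply means that $G$ admits a bipartition one side of which has fewer than $s$ vertices, and a bipartite graph with a side of size at most $s-1$ has clique-width at most $s+1$ (give each vertex of the small side a private label). For the other three families we use that $A^\alpha\li B^\beta$ implies that every weakly $A^\alpha$-free bipartite graph is weakly $B^\beta$-free, so it suffices to bound the clique-width of weakly $(P_6)^b$-free, weakly $(P_2+P_4)^b$-free and weakly $(P_1+P_5)^b$-free bipartite graphs. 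Since $(P_6)^b$ and $(P_2+P_4)^b$ are invariant under exchanging the two colours and the bipartition of each connected component is determined up to colour-reversal, the classes of weakly $(P_6)^b$-free and weakly $(P_2+P_4)^b$-free bipartite graphs coincide with the classes of $P_6$-free and $(P_2+P_4)$-free bipartite graphs, respectively; for these one gives a direct structural decomposition of the prime graphs in the class (a bounded-length dominating substructure, from which the clique-width bound follows). The genuinely labelled case is weakly $(P_1+P_5)^b$-free: here one fixes, for a prime graph $G$ in the class, a labelling avoiding $(P_1+P_5)^b$ and argues that this rules out, roughly, a black isolated vertex coexisting with a ``far-away'' induced $P_5$ carrying three black vertices, which in turn forces $G$ to admit few large modules or few bipartite-complementation operations reducing it to a graph of bounded clique-width.

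The main obstacle is the ``if'' direction, specifically the structural analysis of $(P_2+P_4)$-free bipartite graphs and of weakly $(P_1+P_5)^b$-free bipartite graphs: unlike the $(sP_1)^b$ case there is no single obvious bounded parameter, so one must extract an explicit clique-width-bounding decomposition from the mere absence of the forbidden (labelled) induced subgraph. On the ``only if'' side the difficulty is mostly bookkeeping --- exhibiting, for each of the remaining labelled graphs, an unbounded-clique-width family that avoids it --- but it is easy to miss a labelling, which is why the reduction via Theorem~\ref{thm:bipartite} together with the colour-reversal symmetry is what makes the case analysis tractable.
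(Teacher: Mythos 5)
First, note that the paper does not prove Theorem~\ref{thm:bipweak} at all: it is imported verbatim from~\cite{DP14}, so there is no in-paper proof to compare your proposal against. Judged on its own, your skeleton is sound and matches how one would naturally organise such a proof. The observation that $H$-free bipartite graphs form a subclass of weakly $H^\ell$-free bipartite graphs, so that Theorem~\ref{thm:bipartite} reduces the ``only if'' direction to finitely many unlabelled graphs, is correct; so is the symmetry between $\ell$ and $\overline{\ell}$. Your handling of $(sP_1)^b$ is complete, and your reduction of the weakly $(P_6)^b$-free and weakly $(P_2+\nobreak P_4)^b$-free cases to the unlabelled $P_6$-free and $(P_2+\nobreak P_4)$-free classes is valid (both labelled graphs are self-complementary under colour reversal component-by-component, so every induced copy is automatically labelled correctly); since $P_6\ssi S_{1,2,3}$ and $P_2+\nobreak P_4\ssi S_{1,2,3}$, these two cases then follow directly from Theorem~\ref{thm:bipartite} with no further structural work needed.

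The genuine gaps are exactly the two places you flag as ``the main obstacle,'' and they are not minor loose ends but the entire technical content of the theorem. For boundedness of weakly $(P_1+\nobreak P_5)^b$-free bipartite graphs --- the one case where the labelling genuinely matters, since $P_1+\nobreak P_5$ has four pairwise non-isomorphic labellings and only two of them yield bounded classes --- your sketch (``rules out a black isolated vertex coexisting with a far-away induced $P_5$ carrying three black vertices, which \ldots{} forces $G$ to admit few large modules'') is not an argument; nothing is established about why the absence of this configuration bounds clique-width. For the ``only if'' direction, the reduction to finitely many cases still leaves you needing explicit unbounded-clique-width families for, e.g., every labelling of $K_{1,3}+\nobreak 3P_1$, of $P_1+\nobreak S_{1,1,3}$ and of $S_{1,2,3}$ that is not a labelled induced subgraph of one of the four target graphs (note these unlabelled graphs all give \emph{bounded} $H$-free classes, so no off-the-shelf family works; the family must exploit the labelling). ``Suitably chosen walls with a striped $2$-colouring'' names a toolkit but verifies nothing, and missing even one labelling invalidates the classification. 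As it stands the proposal is a correct reduction plus two unproven core claims.
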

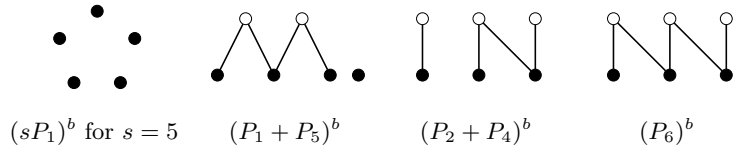
\begin{figure}[h]
\begin{center}
\begin{tabular}{cccc}
\begin{minipage}{0.20\textwidth}
\centering
\scalebox{0.75}{
{\begin{tikzpicture}[scale=1,rotate=90]
\GraphInit[vstyle=Simple]
\SetVertexSimple[MinSize=6pt]
\Vertex[a=0,d=0.7]{a}
\Vertex[a=72,d=0.7]{b}
\Vertex[a=144,d=0.7]{c}
\Vertex[a=216,d=0.7]{d}
\Vertex[a=288,d=0.7]{e}
\end{tikzpicture}}}
\end{minipage}
&
\begin{minipage}{0.20\textwidth}
\centering
\scalebox{0.75}{
\begin{tikzpicture}[scale=1]
\GraphInit[vstyle=Simple]
\SetVertexSimple[MinSize=6pt]
\Vertex[x=0,y=0]{x0}
\Vertex[x=1,y=0]{x2}
\Vertex[x=2,y=0]{x4}
\Vertex[x=2.5,y=0]{x5}
\SetVertexSimple[MinSize=6pt,FillColor=white]
\Vertex[x=0.5,y=1]{x1}
\Vertex[x=1.5,y=1]{x3}
\Edges(x0,x1,x2,x3,x4)
\end{tikzpicture}}
\end{minipage}
&
\begin{minipage}{0.20\textwidth}
\centering
\scalebox{0.75}{
\begin{tikzpicture}[scale=1]
\GraphInit[vstyle=Simple]
\SetVertexSimple[MinSize=6pt]
\Vertex[x=0,y=0]{x0}
\Vertex[x=1,y=0]{x2}
\Vertex[x=2,y=0]{x4}
\SetVertexSimple[MinSize=6pt,FillColor=white]
\Vertex[x=0,y=1]{x1}
\Vertex[x=1,y=1]{x3}
\Vertex[x=2,y=1]{x5}
\Edge(x0)(x1)
\Edges(x2,x3,x4,x5)
\end{tikzpicture}}
\end{minipage}
&
\begin{minipage}{0.20\textwidth}
\centering
\scalebox{0.75}{
\begin{tikzpicture}[scale=1]
\GraphInit[vstyle=Simple]
\SetVertexSimple[MinSize=6pt]
\Vertex[x=0,y=0]{x0}
\Vertex[x=1,y=0]{x2}
\Vertex[x=2,y=0]{x4}
\SetVertexSimple[MinSize=6pt,FillColor=white]
\Vertex[x=0,y=1]{x1}
\Vertex[x=1,y=1]{x3}
\Vertex[x=2,y=1]{x5}
\Edges(x0,x1,x2,x3,x4,x5)
\end{tikzpicture}}
\end{minipage}\\
& & &\\
$(sP_1)^b$ for $s=5$ & $(P_1+P_5)^b$ & $(P_2+P_4)^b$ & $(P_6)^b$
\end{tabular}
\caption{The labelled bipartite graphs from Theorem~\ref{thm:bipweak}.}
\label{fig:bip-weakly}
\end{center}
\end{figure}
Similarly to the way that a bipartite graph can have multiple labellings, a split
graph~$G$ may have multiple split partitions, say $(K_1,I_1)$ and $(K_2,I_2)$.
We say that two such split partitions are {\em isomorphic} if there is an
isomorphism $f:\nobreak V(G) \to V(G)$ of~$G$ such that $u \in K_1$ if and only if $f(u) \in K_2$.
Let~$G$ and~$H$ be split graphs with split partitions $(K_G,I_G)$ and $(K_H,I_H)$, respectively.
Then
$(K_G,I_G)$ {\it contains} $(K_H,I_H)$ if $H \ssi G$, $K_H\subseteq K_G$ and $I_H\subseteq I_G$.
We will explore the properties of split partitions in the proof of Lemma~\ref{lem:split-f4-f5-lemma}.

\section{Proofs of the Bounded Cases in Theorem~\ref{thm:split-classification}}\label{s-bounded}

In this section we show that the clique-width of each of the seven classes of $H$-free graphs given in Theorem~\ref{thm:split-classification} is bounded.
We start with the case $H=rP_1$, for which we give an explicit bound.\footnote{For the other bounded cases we do not specify any upper bounds. This would complicate our proofs for negligible gain, as our primary goal is to show boundedness. Moreover, in our proofs we apply graph operations that may exponentially increase the upper bound on the clique-width, which means that any bounds obtained from our proofs would be very large and far from being tight.
Furthermore, we make use of other results that do not give explicit bounds.}

\begin{theorem}\label{thm:rP_1}
For any $r \geq 1$, the class of $rP_1$-free graphs has clique-width at most $r+1$.
\end{theorem}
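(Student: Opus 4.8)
The plan is to bound the clique-width of an $rP_1$-free graph $G$ directly, using the fact that such a graph has no independent set of size $r$. First I would observe that any maximal independent set $I$ of $G$ has $|I| \leq r-1$, since $rP_1$-freeness forbids an independent set on $r$ vertices. Fix such a maximal independent set $I = \{v_1, \ldots, v_t\}$ with $t \leq r-1$; by maximality every vertex of $K := V(G) \setminus I$ has a neighbour in $I$.

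Next I would set up a labelled construction using $t+1 \leq r$ labels. The idea is to assign each vertex $u \in K$ a label determined by which of $v_1, \ldots, v_t$ it is adjacent to — or, more economically, by the smallest index $i$ such that $uv_i \in E(G)$ — so the vertices of $K$ receive labels from $\{1, \ldots, t\}$, and reserve one further label, say $0$ (or $t+1$), to introduce the $v_i$'s one at a time. Concretely, I would build $G$ as follows: process $v_1, v_2, \ldots, v_t$ in order; when handling $v_i$, introduce $v_i$ with the spare label and introduce (via disjoint union) all not-yet-added vertices of $K$ whose smallest neighbour-index in $I$ equals $i$, each created with label $i$; then apply the join operation (operation 3) between label $i$ and the spare label to connect $v_i$ to exactly the right part of $K$; finally relabel $v_i$ into label $i$ as well, freeing the spare label for the next round. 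Since $K$ is a clique, I would need to be slightly more careful — I would interleave the join between distinct labels $i \neq j$ among $\{1,\ldots,t\}$ at the appropriate moments to create all clique edges within $K$, and handle edges between $v_i$ and vertices of $K$ labelled $j > i$ through the spare label before relabelling — but the total label count stays at $t+1 \leq r+1$. Alternatively, and perhaps more cleanly, I would note that $K$ induces a clique (clique-width $\leq 2$) and $I$ is a module-like structure one can attach vertex-by-vertex; Fact~\ref{fact:del-vert} (vertex deletion preserves boundedness) is not quite enough for an explicit constant, so the explicit construction above is the safer route.

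The main obstacle will be bookkeeping the edges between $K$ and $I$ together with the clique edges inside $K$ without spending extra labels: a vertex of $K$ may be adjacent to several $v_i$, so a label recording only "smallest neighbour index" must still be used to realise adjacencies to the other $v_j$'s, which requires performing the relevant joins \emph{before} the label is reused. I expect this to work because each vertex of $K$ labelled $i$ is adjacent to $v_i$ and to all of $K$, and the order of operations can be arranged so that every required edge (and no forbidden one) is created exactly once; the key invariant to maintain is that at the start of round $i$, the already-constructed part of $G$ on $\{v_1,\ldots,v_{i-1}\} \cup (\text{processed part of } K)$ is correct and uses only labels $\{1,\ldots,t\}$, with the spare label free. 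Carrying this invariant through all $t \leq r-1$ rounds yields a construction with at most $t+1 \leq r+1$ labels, giving $\cw(G) \leq r+1$.
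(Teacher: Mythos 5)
Your construction has a genuine gap. In round $i$ you create, all at once and all with label $i$, every vertex of $K$ whose smallest neighbour-index in $I$ equals $i$. From that moment on these vertices are indistinguishable to the clique-width operations: any join involving label $i$ affects all of them identically, so they are forced to have identical adjacency to every vertex introduced later --- in particular to $v_j$ for every $j>i$. But two vertices of $K$ with the same smallest neighbour-index can perfectly well differ in their adjacency to some later $v_j$, so the $K$--$I$ edges cannot be realised by any ordering of the joins; the ``interleaving'' you hope for cannot exist. A second, related failure: after round $i$ the label class $i$ contains both $v_i$ and some clique vertices, so the join between labels $i$ and $j$ that you need for the clique edges inside $K$ would also create the forbidden edge $v_iv_j$, as well as spurious edges from $v_j$ to round-$i$ clique vertices that are not its neighbours. (Two smaller points: the statement is really about $rP_1$-free \emph{split} graphs --- for general graphs it is false, since $3P_1$-free graphs contain all complements of bipartite graphs and hence have unbounded clique-width --- and even in a split graph the complement of an arbitrary maximal independent set need not be a clique, so you should start from a split partition $(K,I)$ rather than from a maximal independent set.)

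The fix, which is the paper's proof, is to batch $I$ rather than $K$: introduce the at most $r-1$ vertices of $I$ first, each with its own label that it keeps forever, and then add the vertices of $K$ \emph{one at a time} using just two further labels, ``new'' and ``processed''. Each new clique vertex is joined to ``processed'' (creating its clique edges) and to the individual labels of exactly its neighbours in $I$ (possible because those labels are singletons), and is then relabelled ``processed''. This uses $(r-1)+2=r+1$ labels and never forces two vertices with different future adjacencies to share a label.
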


\begin{proof}
Let $H=rP_1$ for some $r \geq 1$ and let~$G$ be an $H$-free split graph with split partition $(K,I)$.
It follows that $|I| < r$.
In this case it is easy to see that the clique-width of~$G$ is at most~$r+1$:
We introduce the (at most $r-1$) vertices of~$I$ with distinct labels.
We use one more label for ``new'' vertices of~$K$ and one more label for ``processed'' vertices of~$K$.
We then add each vertex of~$K$ one-by-one, labelling it with the ``new'' label, and immediately connect it to all the already ``processed'' vertices of~$K$, along with any relevant vertices of~$I$, after which we relabel the new vertex to be ``processed.''\qed
\end{proof}

We now consider the cases $H=\bull+\nobreak P_1$ and $H=Q$. In order to prove these two cases we apply Theorem~\ref{thm:bipweak} for the first time.

\begin{theorem}\label{thm:split-bounded-from-weakly-bip}
The class of 
$(\bull+\nobreak P_1)$-free split graphs and the class of~$Q$-free split graphs have bounded clique-width.
\end{theorem}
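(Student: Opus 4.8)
The strategy is to reduce each of the two classes of $H$-free split graphs to a class of weakly $H^\ell$-free bipartite graphs and then invoke Theorem~\ref{thm:bipweak}. Let $G$ be a split graph with split partition $(K,I)$, and let $G_B$ denote the bipartite graph obtained from $G$ by performing a subgraph complementation with respect to $G[K]$, i.e.\ by deleting all edges of the clique $K$. By Facts~\ref{fact:comp} and~\ref{fact:bip} it suffices to bound the clique-width of the graphs $G_B$ obtained this way, and by Lemma~\ref{lem:prime} we may restrict attention to prime split graphs, which will let us control the (otherwise troublesome) multiplicity of split partitions. We equip $G_B$ with the black-and-white labelling $\ell^*$ that colours every vertex of $K$ black and every vertex of $I$ white; this is a legitimate bipartition of $G_B$. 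The plan is then to show: if $G$ is $(\bull+P_1)$-free (respectively $Q$-free) then $G_B^{\ell^*}$ contains no labelled induced subgraph isomorphic to one of a short, explicitly determined list of labelled bipartite graphs, each of which appears (up to colour swap and taking labelled induced subgraphs) in the bounded cases of Theorem~\ref{thm:bipweak} — concretely, I expect each list to consist of labellings that are labelled induced subgraphs of $(P_1+P_5)^b$, $(P_2+P_4)^b$, or $(P_6)^b$.

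\textbf{Key steps, in order.} First, set up the reduction: argue via Lemma~\ref{lem:prime} that it is enough to handle prime split graphs, and recall (or prove as a short claim) that a prime split graph has an essentially unique split partition, so the labelling $\ell^*$ is well-defined up to the choices that Theorem~\ref{thm:bipweak} already tolerates (the $\overline{\ell}$ symmetry, handled also by Lemma~\ref{lem:complement}). Second, translate the forbidden induced subgraph of $G$ into forbidden \emph{labelled} induced subgraphs of $G_B^{\ell^*}$: an induced copy of $H$ in $G$ corresponds, after deleting clique edges, to an induced copy of a bipartite graph $H'$ in $G_B$ in which the vertices of $H\cap K$ are black and those of $H\cap I$ are white; since $H$ may intersect $K$ and $I$ in several ways, one gets a finite family of labelled bipartite graphs, one per partition of $V(H)$ into a ``clique part'' and an ``independent part'' (the clique part must induce a clique in $H$, the independent part an independent set). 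Third, enumerate this family for $H=\bull+P_1$ and for $H=Q$ and check that \emph{each} member is a labelled induced subgraph of one of $(P_1+P_5)^b$, $(P_2+P_4)^b$ or $(P_6)^b$ (or is one of the $(sP_1)^b$ cases). Concluding: $G_B$ is then weakly $H_i^{\ell_i}$-free for each such labelled graph, but actually we need the stronger statement that it is weakly free with respect to \emph{all} of them simultaneously — here one must be careful, because (as the Example in the excerpt warns) being individually weakly free does not imply being jointly weakly free. The resolution is that a \emph{single} labelling, namely $\ell^*$, avoids all of them at once, so $G_B^{\ell^*}$ witnesses joint weak-freeness, and we apply the ``if'' direction of Theorem~\ref{thm:bipweak} to each forbidden graph to conclude bounded clique-width of the ambient bipartite class, hence of $\{G_B\}$, hence of $\{G\}$.

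\textbf{Main obstacle.} The delicate point is the handling of split partitions. Theorem~\ref{thm:bipweak} is a statement about weak $H^\ell$-freeness, i.e.\ existence of a \emph{good} labelling; but our construction of $G_B$ depends on the chosen split partition $(K,I)$, and a split graph with many non-isomorphic split partitions could a priori require different forbidden-subgraph analyses for each. Restricting to prime graphs via Lemma~\ref{lem:prime} is the right tool — a prime split graph on at least three vertices has a unique split partition up to isomorphism (the vertices of degree $|K|-1$ adjacent in every maximum clique are forced, etc.) — but this uniqueness claim needs a careful short proof, and one must also dispose of the small/non-prime leftovers (which have bounded size or are handled by induction on modules). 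A secondary subtlety is making sure the enumeration of labelled bipartite ``shadows'' of $\bull+P_1$ and of $Q$ is exhaustive: for $\bull+P_1$ one must consider which of the bull's three edges/triangle survive, and for $Q$ (six vertices, two of them isolated or pendant depending on the drawing) similarly; I expect the worst cases to be $(P_2+P_4)^b$-type and $(P_1+P_5)^b$-type shadows, but verifying that none of the shadows escapes the bounded list of Theorem~\ref{thm:bipweak} is the real content once the partition-uniqueness lemma is in place. One may streamline the whole argument by noting that the needed partition-uniqueness fact is exactly what Lemma~\ref{lem:split-f4-f5-lemma} (referenced later in the excerpt) is designed to provide, and simply citing it.
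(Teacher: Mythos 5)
Your overall architecture (complement the clique, view the result as a labelled bipartite graph with $K$ one colour and $I$ the other, invoke Theorem~\ref{thm:bipweak}) is exactly the paper's, but the step you yourself flag as ``the real content'' would fail as planned. You propose to enumerate \emph{every} labelled bipartite shadow of $H$ --- one for each split partition of $H$ --- and to check that \emph{each} lies in the bounded list of Theorem~\ref{thm:bipweak}. For $H=\bull+P_1$ this check fails: the bull plus $P_1$ has two non-isomorphic split partitions, and the one that puts the whole triangle $\{a,b,c\}$ into the clique side yields, after complementing the clique, the labelled graph $2P_2+2P_1$ with three vertices of each colour (one endpoint of each $P_2$ and one isolated vertex on each side). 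That labelled graph is not $(sP_1)^b$ or its reversal and is not a labelled induced subgraph of $(P_1+P_5)^b$, $(P_2+P_4)^b$ or $(P_6)^b$ or their reversals (it has six vertices but only two edges), so by the ``only if'' direction of Theorem~\ref{thm:bipweak} the class of bipartite graphs weakly free of it has \emph{unbounded} clique-width; that shadow can never give you the conclusion. The repair --- and the paper's actual argument --- is that you need only \emph{one} shadow $H_0^\ell$ from the bounded list: any labelled copy of $H_0^\ell$ in $G'^{\ell^*}$ (white vertices in $K$, black in $I$) turns back into an induced $H$ in $G$ once the clique edges are restored, so $H$-freeness of $G$ already makes $G'$ weakly $H_0^\ell$-free, and the class of weakly $H_0^\ell$-free bipartite graphs is bounded. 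For $\bull+P_1$ the right shadow is $(P_1+P_5)^b$, coming from the split partition $(\{a,b\},\{c,d,e,f\})$ (deleting the edge $ab$ leaves the path $d\,a\,c\,b\,e$ plus the isolated vertex); for $Q$ it is $(P_2+P_4)^b$. There is no need for ``joint'' weak freeness of a whole family, and no need for all shadows to be tame.

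Your ``main obstacle'' is also not an obstacle. You may fix an arbitrary split partition $(K,I)$ of $G$: the implication ``a labelled copy of $H_0^\ell$ in $G'^{\ell^*}$ yields an induced copy of $H$ in $G$'' holds for every choice of partition, and Fact~\ref{fact:comp} transfers boundedness back to $G$ whichever clique was complemented. Hence the reduction to prime graphs via Lemma~\ref{lem:prime} and any uniqueness-of-split-partition claim are unnecessary here; in particular you should not cite Lemma~\ref{lem:split-f4-f5-lemma}, which concerns split partitions of the forbidden graph $H$ rather than of $G$, is proved later, and is used only for the necessity direction of the classification.
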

\begin{proof}
Let~$H$ be $\bull+\nobreak P_1$ or~$Q$ and
let~$H_0^\ell$ be the labelled bipartite graph
$(P_1+\nobreak P_5)^b$ or $(P_2+\nobreak P_4)^b$, respectively.
Suppose~$G$ is an $H$-free split graph and fix a split partition
$(K,I)$ of~$V(G)$. Let~$G'$ be the graph
obtained from~$G$ by applying a complementation to~$G[K]$. By
Fact~\ref{fact:comp}, we need only show that~$G'$ has bounded clique-width.
Now~$G'$ is a bipartite graph with bipartition~$(K,I)$. If we label the
vertices of~$K$ white and the vertices of~$I$ black, then we find that~$G'$ is a weakly
$H_0^\ell$-free bipartite graph and therefore has bounded clique-width by
Theorem~\ref{thm:bipweak}.
\qed
\end{proof}

The next theorem follows from Theorem~\ref{thm:chordal-classification} and Lemma~\ref{lem:complement} (recall that every split graph is chordal).
However, the proof of the corresponding case for chordal graphs is much more complicated.
In light of this, and to make this paper more self-contained, we include a (much simpler) direct proof for this case.

\begin{theorem}\label{thm:k13+2p1-split}
The class of $(K_{1,3}+2P_1)$-free split graphs has bounded clique-width.
\end{theorem}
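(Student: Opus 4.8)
The plan is to work directly with a split partition $(K,I)$ of a $(K_{1,3}+2P_1)$-free split graph $G$ and to reduce, by a bounded sequence of the operations from Facts~\ref{fact:del-vert}--\ref{fact:bip}, to a disjoint union of stars together with a clique, which has clique-width at most~$2$. First I would dispose of the case $|I|\le 6$ separately: two vertices of $K$ with the same neighbourhood in $I$ are true twins, so $K$ is partitioned into at most $2^{|I|}$ modules, hence every prime induced subgraph of $G$ has at most $2^{|I|}+|I|$ vertices and $\cw(G)$ is bounded by Lemma~\ref{lem:prime}. So from now on assume $|I|\ge 7$.

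Next I would analyse the clique side via the forbidden subgraph. If some $c\in K$ has at least three neighbours \emph{and} at least two non-neighbours in $I$, then three of the former form an induced $K_{1,3}$ with centre $c$ and two of the latter form a $2P_1$ anti-complete to it, which is forbidden. Hence, writing $A=\{c\in K: |N(c)\cap I|\ge 3\}$ and $B=K\setminus A$, every vertex of $A$ is complete to $I$ with at most one exception, and every vertex of $B$ has at most two neighbours in $I$. A similar counting argument rules out an induced $K_{1,3}$ with centre $c\in B$: such a claw would have exactly two leaves $l_2,l_3\in I$ and a third leaf $u\in K$ non-adjacent to both, forcing $u\in B$ (a vertex of $A$ misses at most one vertex of $I$); but then $I\setminus(N(c)\cup N(u))$ has at least $|I|-4\ge 3$ vertices, any two of which together with $\{c,u,l_2,l_3\}$ induce a $K_{1,3}+2P_1$. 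Since the centre of any induced claw in a split graph lies in the clique, this shows that $G[B\cup I]$ is claw-free.

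I would then read off the structure of the edges between $B$ and $I$. For distinct $c,c'\in B$ each having exactly two neighbours in $I$, claw-freeness of $G[B\cup I]$ forces $N(c)\cap N(c')\cap I\neq\emptyset$; thus the family of these $2$-element neighbourhoods is pairwise intersecting, hence either all of its members share a common element or all of them lie in $\binom{\{x,y,z\}}{2}$ for some three vertices $x,y,z$. Either way there is a set $T\subseteq I$ with $|T|\le 3$ such that $|N(c)\cap(I\setminus T)|\le 1$ for every $c\in B$. Finally I would apply a bipartite complementation between $A$ and $I$ (Fact~\ref{fact:bip}); this turns every vertex of $A$ into one with at most one neighbour in $I$ and leaves the adjacencies between $B$ and $I$ unchanged. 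After deleting the at most three vertices of $T$ (Fact~\ref{fact:del-vert}), every vertex of $K$ now has at most one neighbour in $I$, so removing the edges of the clique $K$ (Fact~\ref{fact:comp}) leaves a disjoint union of stars; as this has clique-width at most~$2$, the chain of operations gives $\cw(G)\le c$ for a constant~$c$.

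The step I expect to cause the most trouble is choosing the decomposition $K=A\cup B$ and checking that it does everything needed: one has to verify that $(K_{1,3}+2P_1)$-freeness \emph{simultaneously} forces the vertices of $A$ to be almost complete to $I$ and forbids claws centred in $B$, and one has to be comfortable with the fact that the bipartite complementation on $A$ need not preserve $(K_{1,3}+2P_1)$-freeness of $G$ as a whole — it is used purely as a clique-width-preserving operation that flattens the remaining bipartite graph into a union of stars. Once this framework is in place, the structural analysis of $G[B\cup I]$ via pairwise-intersecting $2$-sets is routine.
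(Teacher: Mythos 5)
Your proof is correct and follows essentially the same route as the paper: the same observation that every clique vertex has at most two neighbours or at most one non-neighbour in $I$, a bounded set of $I$-vertices deleted so that the low-degree clique vertices retain at most one neighbour, a bipartite complementation on the almost-complete clique vertices and a complementation of $K$, ending in a disjoint union of stars. The only (cosmetic) differences are that the paper handles small $|I|$ via $rP_1$-freeness and, instead of your pairwise-intersecting-$2$-sets/triangle argument giving a hitting set $T$ with $|T|\le 3$, it simply deletes the two $I$-neighbours of one fixed vertex of $K'$, since every other $2$-element neighbourhood must meet them.
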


\begin{proof}
Let~$G$ be a $(K_{1,3}+2P_1)$-free split graph and fix of a partition of its
vertices into a clique~$K$ and an independent set~$I$. 
If $|I| \leq 5$ then~$G$ is $7P_1$-free (at most one vertex of any independent set in~$G$ can belong to~$K$), in which case we are done by Theorem~\ref{thm:rP_1}.
We therefore assume that $|I| \geq 6$.
Since~$G$ is
$(K_{1,3}+2P_1)$-free, every vertex in~$K$ has either at most two neighbours
in~$I$ or at most one non-neighbour in~$I$. Let~$K'$ be the set of vertices
in~$K$ that have exactly two neighbours in~$I$.
Suppose $x,y \in K'$ and let~$w$ and~$w'$ be the two neighbours of~$x$ in~$I$ and let~$z$ and~$z'$ be two common non-neighbours of~$x$ and~$y$ in~$I$ (which exist since $|I| \geq 6$).
Then one of~$y$'s neighbours in~$I$ must be~$w$ or~$w'$ otherwise $G[x,y,w,w',z,z']$ would be a $K_{1,3}+\nobreak 2P_1$, a contradiction.

If~$K'$ is is non-empty, choose $x \in K'$ arbitrarily and delete
both neighbours of~$x$ in~$I$ (we may do this by Fact~\ref{fact:del-vert}) to
obtain a graph~$G'$. Now every vertex of~$K'$ has at most one neighbour in $I'=I \cap V(G')$
in the graph~$G'$. (If~$K'$ was already empty, then we set $G'=G, I'=I$.) In the
graph~$G'$ every vertex in~$K$ has either at most one neighbour or at most one
non-neighbour in~$I'$. Let~$K''$ be the set of vertices that have more than one
neighbour in~$I'$. By Fact~\ref{fact:bip}, we may apply a bipartite
complementation between~$K''$ and~$I'$ to obtain a graph~$G''$ in which every
vertex of~$K$ has at most one neighbour in~$I'$. Finally apply a complementation
to the set~$K$ (we may do this by Fact~\ref{fact:comp}). The resulting graph is
a disjoint union of stars, so it has clique-width at most~$2$. This completes
the proof.\qed
\end{proof}

It remains to prove that the class of $F_i$-free graphs has bounded clique-width for $i\in{1,2,3}$. We do this in Theorems~\ref{thm:f1-split}--\ref{thm:f3-split}.

\begin{theorem}\label{thm:f1-split}
The class of $F_1$-free split graphs has bounded clique-width.
\end{theorem}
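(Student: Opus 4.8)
We need to show the class of $F_1$-free split graphs has bounded clique-width. Looking at $F_1$ (from Figure~\ref{fig:bounded-split}): it is built on a $4$-cycle $a,b,c,d$ with a chord $ac$ (so $a,b,c,d$ with edges $ab,bc,cd,da,ac$ — that is, a diamond / $K_4$ minus an edge), together with a pendant vertex $e$ attached to $a$ and a pendant vertex $f$ attached to $d$. So $F_1$ has six vertices; the vertices $\{a,b,c,d\}$ induce $K_4-bd$, and $e,f$ are each a leaf hanging off opposite ends of the ``missing edge''. Since $F_1$ is self-complementary or has a nice complement, by Lemma~\ref{lem:complement} we may also work with $\overline{F_1}$ if convenient, but I would try to work directly.

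The approach I would take mirrors Theorem~\ref{thm:split-bounded-from-weakly-bip}: fix a split partition $(K,I)$ of the given $F_1$-free split graph $G$; it suffices (by Fact~\ref{fact:del-vert}, handling the bounded-$|I|$ case via Theorem~\ref{thm:rP_1}) to assume $|I|$ and $|K|$ are both large. The key structural step is to understand, for $v\in K$, what its neighbourhood $N(v)\cap I$ can look like, using the forbidden $F_1$. Concretely, suppose two vertices $x,y\in K$ each have ``many'' neighbours and ``many'' non-neighbours in $I$; then one can try to locate vertices $e\in N(x)\cap I$, $f\in N(y)\cap I$ (with $e\ne f$) and verify that $\{x,y,e,f\}$ plus suitable extra clique-vertices or independent-set vertices realises a diamond-with-two-pendants, i.e. an induced $F_1$ — unless the adjacency pattern between $\{x,y\}$ and $I$ is severely constrained. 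The aim is a dichotomy: after deleting a bounded number of vertices of $I$ and applying a bounded number of bipartite complementations (Facts~\ref{fact:del-vert} and~\ref{fact:bip}), every vertex of $K$ has at most a constant number of neighbours in the remaining $I$, or the remaining structure is governed by a chain/nested family (the partially ordered set and ``chain'' terminology introduced in Section~\ref{sec:prelim} suggests the intended argument orders the sets $N(v)\cap I$ by inclusion). If the neighbourhoods in $I$ of the clique vertices form a chain under inclusion after bounded modifications, one gets a $2$-subdivision-free / cograph-like structure and bounded clique-width follows, e.g. by reducing to a bipartite graph that is weakly $H^\ell$-free for one of the graphs in Theorem~\ref{thm:bipweak}, or directly by exhibiting a bounded labelling scheme as in Theorem~\ref{thm:rP_1}.

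So the concrete steps would be: (1) reduce to $|I|\ge c$ and $|K|\ge c$ for a suitable constant $c$ using Theorem~\ref{thm:rP_1} and Fact~\ref{fact:del-vert}; (2) prove a lemma: for each $v\in K$, either $|N(v)\cap I|\le c'$ or $|I\setminus N(v)|\le c'$, by an $F_1$-avoidance argument (find a diamond using two clique vertices and a common neighbour/non-neighbour pattern, attach pendants $e,f$); (3) partition $K$ into $K_{\mathrm{low}}$ (few neighbours in $I$) and $K_{\mathrm{high}}$ (few non-neighbours in $I$); within each part, show the relevant neighbourhoods/non-neighbourhoods form a near-chain (nested up to bounded error) — this is where $F_1$-freeness does the real work, as two ``crossing'' neighbourhoods would again produce an induced $F_1$; (4) perform bounded bipartite complementations to turn $K_{\mathrm{high}}$ into a ``low''-type part, then a subgraph complementation on $K$ (Fact~\ref{fact:comp}) to kill the clique edges; (5) observe the resulting bipartite graph has bounded clique-width, either because it is a disjoint union of small-degree-controlled pieces or via Theorem~\ref{thm:bipweak} after choosing the natural black-and-white labelling.

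The main obstacle I anticipate is step (3): controlling how the neighbourhoods $N(v)\cap I$ for $v\in K$ interact. Unlike the $K_{1,3}+2P_1$ case, where the forbidden graph directly forces ``at most two neighbours or at most one non-neighbour,'' here the diamond inside $F_1$ means two clique vertices must be adjacent (they always are, since $K$ is a clique), so the pendants $e$ and $f$ must attach to the two \emph{non-adjacent} vertices of the diamond — meaning $e$ is adjacent to one of $\{x,y\}$ and not the other, and symmetrically for $f$, plus the diamond needs two more vertices adjacent to both $x$ and $y$. Carefully bookkeeping which vertices can play the roles $b,c$ (common neighbours) versus $e,f$ (private neighbours), and ruling out crossing patterns without an explosion of cases, is the delicate part; I would expect to need a clean auxiliary claim of the form ``if $v,w\in K$ with $N(v)\cap I$ and $N(w)\cap I$ each large and incomparable and with large symmetric difference on both sides, then $G$ contains an induced $F_1$,'' proved by an explicit six-vertex selection, and then to bound the number of ``exceptional'' vertices that must be deleted to make the family a chain.
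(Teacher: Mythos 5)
There is a genuine gap --- in fact several. First, you have misread $F_1$: the figure includes the edge $bd$, so $F_1$ is a $K_4$ on $\{a,b,c,d\}$ with two pendant vertices attached to two distinct vertices of the $K_4$, not a diamond ($K_4$ minus an edge) with two pendants. Since your whole avoidance argument is built around ``realising a diamond-with-two-pendants,'' it targets the wrong graph. Second, the dichotomy you propose in step~(2) --- every $v\in K$ has at most $c'$ neighbours or at most $c'$ non-neighbours in $I$ --- is false for $F_1$-free split graphs: take $K=\{k_1,\dots,k_n\}$, $I=\{x_1,\dots,x_n\}$ with $N(x_i)\cap K=\{k_1,\dots,k_i\}$. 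The neighbourhoods of the $I$-vertices form a chain, and since the two pendant vertices of $F_1$ have incomparable neighbourhoods in its clique, this graph is $F_1$-free; yet $k_{\lceil n/2\rceil}$ has linearly many neighbours and linearly many non-neighbours in $I$. Third, you explicitly flag step~(3) as the anticipated obstacle and do not carry it out; that step is exactly where the content of the theorem lies, so the proposal does not constitute a proof.

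For comparison, the paper's argument runs along different lines. It first reduces to \emph{prime} graphs via Lemma~\ref{lem:prime}, which your plan never uses, and then splits on whether $G$ contains $18$ vertex-disjoint ``special'' bulls (bulls with three vertices in $K$ and two in $I$). If not, deleting at most $85$ vertices yields a $Q$-free split graph, handled by Theorem~\ref{thm:split-bounded-from-weakly-bip}. If so, two observations drive the rest: any two $I$-vertices with two common non-neighbours in $K$ have nested neighbourhoods, and every $I$-vertex has a non-neighbour in each bull's clique-triple; these cover $I$ by at most nine chains under neighbourhood inclusion (note: chains of $I$-vertices ordered by their neighbourhoods in $K$, the opposite orientation from your step~(3)), and a counting argument on the maximal elements of these chains produces a non-trivial module of size at least two, contradicting primality. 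Without the primality/module mechanism and with the intermediate lemma of step~(2) being false, the proposed route does not go through.
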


\begin{proof}
Let~$G$ be an $F_1$-free split graph. Fix a split partition $(K,I)$ of~$G$. By
Lemma~\ref{lem:prime}, we may assume that~$G$ is prime. If~$G$ contains an
induced bull (see also \figurename~\ref{fig:bull-dart})
that has three vertices in~$K$ and two in~$I$, we say that this
bull is {\em special}.

First suppose that~$G$ does not contain~$18$ vertex-disjoint special bulls.
By Fact~\ref{fact:del-vert}, we may delete at most $5 \times 17 = 85$ vertices from~$G$ to obtain a split graph with no special bulls.
Since the resulting graph contains no special bulls, it must be $Q$-free, and therefore has bounded clique-width by Theorem~\ref{thm:split-bounded-from-weakly-bip}.

We may therefore assume that~$G$ contains~$18$ vertex-disjoint special bulls,
$B_1,\ldots,B_{18}$, say. For $h \in \{1,\ldots,18\}$, let $J_h=
\{j_{1,h},j_{2,h},j_{3,h}\} = K \cap V(B_h)$ and $I_h=\{i_{1,h},i_{2,h}\}=I
\cap V(B_h)$.
In the remainder of the proof, we will show that~$G$ must contain a non-trivial module, contradicting the fact that~$G$ is prime.

We first state the following two observations, both of which follow directly from the fact that~$G$ is an $F_1$-free split graph.

\obs{\label{obs:F1-nested-nbhds} If $s, t \in I$ have two common non-neighbours in~$K$ then $N(s)\subseteq N(t)$ or $N(t)\subseteq N(s)$.}

\obs{\label{obs:F1-non-neighbour-in-bull} Every $x \in I$ has a non-neighbour in every~$J_h$.}

\noindent Consider the special bulls~$B_1$ and~$B_2$. By
Observation~\ref{obs:F1-non-neighbour-in-bull}, every vertex in~$I$ must have a
non-neighbour in~$J_1$ and a non-neighbour in~$J_2$. Let~$I_{i,j}$ denote the
set of vertices in~$I$ that are non-adjacent to both~$j_{i,1}$ and~$j_{j,2}$,
for $i,j \in \{1,2,3\}$. (Note that every vertex of~$I$ must be in at least one
set~$I_{i,j}$, but it may be in more than one such set.) By
Observation~\ref{obs:F1-nested-nbhds}, for any two vertices $s,t$ in any
set~$I_{i,j}$ either $N(s)\subseteq N(t)$ or $N(t)\subseteq N(s)$.

Since~$G$ is prime, no two vertices of~$I$ have the same neighbourhood.  We may
therefore define a partial order~$\leq_N$ on~$I$: given two vertices $s,t\in I$,
we say that $s\leq_N t$ if $N(s)\subseteq N(t)$. Note every set~$I_{i,j}$ is a chain under
this partial order, so~$I$ can be covered by at most nine chains.

We rename the sets~$I_{i,j}$ to be $S_1,\ldots,S_p$, in an arbitrary order, deleting any sets~$I_{i,j}$ that are empty, so $p \leq 9$.
For $i \in \{1,\ldots,p\}$, let~$s_i$ be the maximum element of~$S_i$ (under
the~$\leq_N$ ordering).  From the definition of the sets $I_{i,j}$ it follows
that for $k \in \{1,\ldots,p\}$, $S_k = \{x \in I \; | \; N(x) \subseteq N(s_k)\}$. If
there are distinct $i,j$ such that $N(s_i) \subseteq N(s_j)$ then~$S_i
\subseteq S_j$, so we may delete the set~$S_i$ from the set of chains~$S_k$
that we consider and every vertex of~$I$ will still be in some set~$S_k$. In
other words, we may assume that $S_1,\ldots,S_q$ are chains under the~$\leq_N$
ordering, with maximal elements $s_1,\ldots,s_q$, respectively, where $q \leq p \leq 
9$ and every pair $s_i,s_j$ is incomparable under the~$\leq_N$ ordering. (Note
that $q \geq 2$, since $i_{1,1},i_{2,1} \in V(B_1)$ have incomparable
neighbourhoods.) 

By Observation~\ref{obs:F1-non-neighbour-in-bull}, for each~$i
\in \{1,\ldots,q\}$, the vertex~$s_i$ must be non-adjacent to at least one
vertex in each of $J_1,\ldots,J_{18}$, so it must have at least~$18$
non-neighbours in~$K$. Let~$X_i$ be the set of vertices in~$K$ that are
non-adjacent to~$s_i$ and note that since~$s_i$ is maximum in~$S_i$, the
set~$X_i$ is anti-complete to~$S_i$.

Since for $i \in \{2,\ldots,q\}$ the vertices~$s_i$ and~$s_1$ are incomparable,
it follows that~$s_i$ is adjacent to all but at most one vertex of~$X_1$ (by
Observation~\ref{obs:F1-nested-nbhds}).  Therefore, there is a subset $X'_1
\subseteq X_1$ with $|X'_1|\geq 18-8=10$ such that~$X'_1$ is complete to
$\{s_2,\ldots,s_q\}$, so $X'_1 \subseteq N(s_i) \setminus N(s_1)$ for $i
\in \{2,\ldots,q\}$.

\clm{\label{clm:1-f1-split}
For $i \in \{2,\ldots,q\}$ there is a vertex $z_i \in K$ such
that every vertex in~$S_i$ is either complete or anti-complete to $N(s_i) \setminus
(N(s_1) \cup \{z_i\})$.}

\noindent
We prove Claim~\ref{clm:1-f1-split} as follows.
Let~$t$ be the smallest (with respect to~$\leq_N$) vertex in~$S_i$ that has a
neighbour, say~$w$, in $N(s_i) \setminus N(s_1)$. Any vertex $s \in S_i$ with
$s <_N t$ is anti-complete to $N(s_i) \setminus N(s_1)$.  Now $t \not \leq_N
s_1$, since~$w$ is not a neighbour of~$s_1$ and $s_1 \not \leq_N t$, since $t
\leq_N s_i$ and $s_1 \not \leq_N s_i$. By
Observation~\ref{obs:F1-nested-nbhds}, $N(t) \cup N(s_1)$ contains all but at
most one vertex of~$K$. If there is a vertex $z_i \in K \setminus (N(t) \cup
N(s_1))$ then~$t$ is complete to $N(s_i) \setminus (N(s_1) \cup \{z_i\})$
(if there is no such vertex then we
choose $z_i \in K$ arbitrarily, and the same conclusion holds). If $s \in S_i$
and $t \leq_N s$ then $N(s) \supseteq N(t) \supseteq N(s_i) \setminus (N(s_1)
\cup \{z_i\})$, as desired. This completes the proof of 
Claim~\ref{clm:1-f1-split}.

\medskip
Recall that for $i \in \{2,\ldots,q\}$, $X'_1 \subseteq N(s_i) \setminus
N(s_1)$. Let $X''_1 = X'_1 \setminus \{z_2,\ldots,z_q\}$ (where $z_2,\ldots,z_q$ are
defined as in Claim~\ref{clm:1-f1-split} above). Then $|X''_1| \geq 10 - 8 = 2$ and every vertex
in~$I$ is either complete or anti-complete to~$X''_1$. Therefore~$X''_1$ is a
non-trivial module of~$G$, contradicting the fact that~$G$ is prime. This
completes the proof.
\qed
\end{proof}

\begin{theorem}\label{thm:f2-split}
The class of $F_2$-free split graphs has bounded clique-width.
\end{theorem}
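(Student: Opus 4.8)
The plan is to mimic the structure of the proof of Theorem~\ref{thm:f1-split} as closely as possible, since $F_2$ is obtained from $F_3$ by an extra pendant edge just as $F_1$ is, and $F_2$ differs from $F_1$ only in where the two extra pendant vertices attach (in $F_1$ the two pendants attach to non-adjacent vertices of the central $K_4$, in $F_2$ to adjacent ones, or vice versa — one should check the figure). First I would fix a split partition $(K,I)$ of the $F_2$-free split graph $G$ and, using Lemma~\ref{lem:prime}, assume $G$ is prime. As before, call an induced bull with three vertices in $K$ and two in $I$ a \emph{special bull}. If $G$ has no collection of (some constant number of) vertex-disjoint special bulls, then after deleting a bounded number of vertices (Fact~\ref{fact:del-vert}) we obtain a $Q$-free split graph, which has bounded clique-width by Theorem~\ref{thm:split-bounded-from-weakly-bip}; so we may assume $G$ contains many vertex-disjoint special bulls $B_1,\dots,B_N$, writing $J_h = K\cap V(B_h)$ and $I_h = I\cap V(B_h)$.

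Next I would extract the structural observations analogous to Observations~\ref{obs:F1-nested-nbhds} and~\ref{obs:F1-non-neighbour-in-bull}, adapted to $F_2$: that every $x\in I$ has a non-neighbour in each $J_h$ (this follows from $F_2$-freeness together with the bull being special — the precise count of common neighbours/non-neighbours forced is what must be checked against the $F_2$ picture), and a ``nested neighbourhoods'' statement for pairs $s,t\in I$ that share enough common non-neighbours in $K$. Using these, I would again partition (cover) $I$ by a bounded number of chains under the partial order $s\leq_N t \iff N(s)\subseteq N(t)$ — the chains being indexed by which vertex of $J_1$ and which vertex of $J_2$ (or perhaps $J_1,J_2,J_3$, depending on how many non-neighbours $F_2$ forces) the vertices of $I$ avoid. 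Pruning redundant chains, I get chains $S_1,\dots,S_q$ with pairwise $\leq_N$-incomparable maxima $s_1,\dots,s_q$, and each $s_i$ has many ($\geq N$, minus a constant) non-neighbours in $K$; let $X_i$ be the non-neighbours of $s_i$ in $K$, which is anti-complete to $S_i$.

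The endgame should again produce a non-trivial module inside $X_1$ (or some $X_i$), contradicting primality. Incomparability of $s_i$ with $s_1$ forces $s_i$ to be complete to all but $O(1)$ vertices of $X_1$, yielding a large $X_1'\subseteq X_1$ complete to $\{s_2,\dots,s_q\}$; then, exactly as in Claim~\ref{clm:1-f1-split}, for each $i$ there is a single ``exceptional'' vertex $z_i\in K$ such that every vertex of $S_i$ is complete or anti-complete to $N(s_i)\setminus(N(s_1)\cup\{z_i\})$, and deleting the $z_i$ from $X_1'$ leaves a set $X_1''$ of bounded-below size to which every vertex of $I$ is complete or anti-complete; since vertices of $K$ are automatically uniform to any subset of $K$, $X_1''$ is a non-trivial module. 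I expect the main obstacle to be verifying that the two key observations actually hold for $F_2$ rather than $F_1$ — i.e.\ checking in the figure exactly how many non-neighbours in a special bull's clique part an $I$-vertex is forced to have, and that the ``two common non-neighbours $\Rightarrow$ nested'' implication survives with $F_2$ as the forbidden subgraph; the constants in the chain count and in the module-size bookkeeping then follow mechanically, and if $F_2$ forces a slightly weaker observation one simply increases the number of special bulls demanded at the start.
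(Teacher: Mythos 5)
There is a genuine gap, and it is exactly at the point you flagged as "the main obstacle": neither of the two key observations from the proof of Theorem~\ref{thm:f1-split} survives when $F_1$ is replaced by $F_2$. The witness for Observation~\ref{obs:F1-nested-nbhds} is this: if $s,t\in I$ have two common non-neighbours $u,v\in K$ and incomparable neighbourhoods, witnessed by $w\in N(s)\setminus N(t)$ and $x\in N(t)\setminus N(s)$, then $G[u,v,w,x,s,t]$ is a $K_4$ with two \emph{pendant} vertices attached to two distinct clique vertices --- that is an $F_1$, not an $F_2$ (in $F_2$ one of the two outside vertices has \emph{two} neighbours in the $K_4$). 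So $F_2$-freeness does not forbid this configuration and the sets $I_{i,j}$ need not be chains under $\leq_N$; the entire chain-covering and module-extraction machinery collapses. Likewise, a vertex of $I$ complete to the clique part $J_h$ of a special bull creates a $K_4$ with two pendants, i.e.\ an $F_1$ and not an $F_2$, so the analogue of Observation~\ref{obs:F1-non-neighbour-in-bull} also fails for special bulls. Your proposed repair --- "increase the number of special bulls demanded at the start" --- cannot fix this, because the failure is structural rather than quantitative.

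What $F_2$-freeness actually gives is different: if $s,t\in I$ have a single common non-neighbour in $K$ and incomparable neighbourhoods, then $|N(s)\setminus N(t)|=|N(t)\setminus N(s)|=1$ (a second vertex in either difference would complete an $F_2$ with $s$ as the pendant and $t$ as the degree-two vertex). The paper's proof works directly with vertex-disjoint copies of $Q$ (which has a unique split partition with three vertices in $K$ and three in $I$) rather than special bulls, and uses this "symmetric differences of size one" claim together with the fact that two of the three $I$-vertices of a copy of $Q$ cannot have a common non-neighbour in $K$, to show that a second disjoint copy of $Q$ immediately forces an induced $F_2$. Hence an $F_2$-free split graph becomes $Q$-free after deleting at most six vertices, and Theorem~\ref{thm:split-bounded-from-weakly-bip} with Fact~\ref{fact:del-vert} finishes the proof --- no chains, no maxima $s_i$, and no module argument are needed. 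Your high-level template (reduce to $Q$-freeness or find a contradiction) is the right one, but the combinatorial core you propose to reuse is specific to $F_1$ and does not transfer.
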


\begin{proof}
Let~$G$ be an $F_2$-free split graph. Fix a split partition $(K,I)$ of~$G$. By
Lemma~\ref{lem:prime}, we may assume that~$G$ is prime. If~$G$ contains an
induced~$Q$ (see also \figurename~\ref{fig:bounded-split}) it must have three vertices in~$K$ and three in~$I$
(since~$Q$ has a unique split partition).

First suppose that~$G$ does not contain two vertex-disjoint copies of~$Q$.
By Fact~\ref{fact:del-vert}, we may delete at most six vertices from~$G$ to obtain a $Q$-free split graph.
By Theorem~\ref{thm:split-bounded-from-weakly-bip}, 
the resulting graph (and thus~$G$) has bounded clique-width.

We may therefore assume that~$G$ contains two vertex-disjoint copies of~$Q$,
say~$Q_1$ and~$Q_{2}$. For 
$h \in \{1,2\}$, let $J_h=
\{j_{1,h},j_{2,h},j_{3,h}\} = K \cap V(Q_h)$ and $I_h=\{i_{1,h},i_{2,h},i_{3,h}\}=I
\cap V(Q_h)$, where 
$E(Q_h)=\{i_{1,h}j_{1,h},i_{2,h}j_{2,h},i_{3,h}j_{2,h},i_{3,h}j_{3,h}\}\cup \{j_{1,h}j_{2,h},j_{1,h}j_{3,h},j_{2,h}j_{3,h}\}$.

We say that two vertices $s,t \in I$ have {\em comparable} neighbourhoods if $N(s) \subseteq N(t)$ or $N(t) \subseteq N(s)$. Otherwise we say that~$s$ and~$t$ have {\em incomparable} neighbourhoods.

\clm{\label{clm:incomp} Suppose $s,t \in I$ have a common non-neighbour $u \in K$. If~$s$ and~$t$ have incomparable neighbourhoods then $|N(s) \setminus N(t)| = |N(t) \setminus N(s)|=1$.}

\noindent
We proof Claim~\ref{clm:incomp} as follows.
Since~$s$ and~$t$ have incomparable neighbourhoods, there must be a vertex $v \in N(s) \setminus N(t)$ and a vertex $w \in N(t) \setminus N(s)$.
Suppose, for contradiction, that there is another vertex $w' \in N(t) \setminus N(s)$. 
Then $G[s,t,u,v,w,w']$ is an~$F_2$. This contradiction completes the proof of 
Claim~\ref{clm:incomp}.

\medskip
\noindent
The vertices~$i_{1,1}$ and~$i_{3,1}$ cannot have a common non-neighbour $t \in K$, otherwise $G[i_{1,1},i_{3,1},j_{1,1},j_{2,1},j_{3,1},t]$ would be an~$F_2$.
It follows that:
\begin{equation}\label{eqn:K-is-covered}
N(i_{1,1}) \cup N(i_{3,1}) = K.
\end{equation}

\noindent
Next, by Claim~\ref{clm:incomp}, since~$i_{1,1}$ and~$i_{2,1}$ have incomparable neighbourhoods and a common non-neighbour in~$K$, namely~$j_{3,1}$ it follows that:
\begin{equation}\label{eqn:similar-neighbourhoods}
N(i_{1,1}) = (N(i_{2,1}) \setminus \{j_{2,1}\}) \cup \{j_{1,1}\}.
\end{equation}
Combining~(\ref{eqn:K-is-covered}) and~(\ref{eqn:similar-neighbourhoods}), we conclude that:
\begin{equation}\label{eqn:K-is-almost-covered}
K \setminus \{j_{1,1}\} \subseteq N(i_{2,1}) \cup N(i_{3,1}).
\end{equation}
\noindent
Now~$i_{2,1}$ and~$i_{3,1}$ have a common non-neighbour, namely $j_{1,1}$. Note that $j_{3,1} \in N(i_{3,1}) \setminus N(i_{2,1})$.
By Claim~\ref{clm:incomp} it follows that either $N(i_{2,1}) \subseteq N(i_{3,1})$ (if~$i_{2,1}$ and~$i_{3,1}$ have comparable neighbourhoods) or $N(i_{3,1})\setminus \{j_{3,1}\} \subset N(i_{2,1})$ (if they do not).
This means that $K \setminus \{j_{1,1},j_{3,1}\}$ is a subset of $N(i_{2,1})$ or $N(i_{3,1})$.
In particular, $i_{2,1}$ or~$i_{3,1}$, respectively, is complete to~$J_2 \subset K$.
Then this vertex, together with~$J_2$, $i_{1,2}$ and~$i_{3,2}$ induces an~$F_2$ in~$G$. This contradiction completes the proof.\qed
\end{proof}

\begin{theorem}\label{thm:f3-split}
The class of $F_3$-free split graphs has bounded clique-width.
\end{theorem}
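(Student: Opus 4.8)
The plan is to follow the same template used for $F_1$ and $F_2$: reduce to the prime case via Lemma~\ref{lem:prime}, and then show that a prime $F_3$-free split graph is either ``small'' (so that after deleting a bounded number of vertices it becomes $Q$-free and we invoke Theorem~\ref{thm:split-bounded-from-weakly-bip}) or else contains a non-trivial module, contradicting primality. So first I would fix a split partition $(K,I)$ of $G$ and note that since $F_3$ has a unique split partition (three clique vertices $a,b,c$ forming a triangle, plus the two pendant vertices $y,z$ — here with $y$ adjacent to $a$ and $z$ adjacent to $b$; in the bounded-case figure $F_3$ is drawn on six vertices $a,b,c,d$ mutually adjacent with $y,z$ pendant), every induced copy of $F_3$ in $G$ splits in a controlled way between $K$ and $I$. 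Then I would ask whether $G$ contains a bounded number of vertex-disjoint copies of $Q$: if not, delete a bounded number of vertices (by Fact~\ref{fact:del-vert}) to get a $Q$-free split graph and finish via Theorem~\ref{thm:split-bounded-from-weakly-bip}; if so, fix two (or a few) disjoint copies $Q_1,Q_2$, whose clique-parts $J_1,J_2$ and independent-parts give us a rigid ``scaffold'' to argue against.

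The core of the argument should be a structural analysis of the neighbourhoods (restricted to $K$) of vertices of $I$. As in the $F_1$ and $F_2$ proofs, the first step is an $F_3$-free analogue of Observations~\ref{obs:F1-nested-nbhds}--\ref{obs:F1-non-neighbour-in-bull}: I would establish that if $s,t\in I$ have a common non-neighbour in a suitable part of $K$ coming from a $Q$-copy, then their neighbourhoods in $K$ are comparable, or differ by a bounded amount — because otherwise $s,t$ together with the clique-part of a $Q$-copy and two ``private'' neighbours would realize an $F_3$. (The precise shape of this claim depends on exactly which five vertices of $F_3$ land where; the two pendant vertices of $F_3$ force either two nested neighbourhoods or two neighbourhoods with symmetric differences of size $1$, much like Claim~\ref{clm:incomp}.) Using this I would cover $I$ by a bounded number of chains under the containment order $\leq_N$ on neighbourhoods in $K$, reduce to a bounded antichain of chain-maxima $s_1,\dots,s_q$, and argue that each $s_i$ has many non-neighbours in $K$; the pairwise incomparability of the $s_i$ then forces a large set $X\subseteq K$ that is (up to a bounded number of exceptional vertices) complete-or-anticomplete to every vertex of $I$, i.e.\ a non-trivial module of $G$. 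This contradicts primality.

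The main obstacle I expect is getting the combinatorics of ``which vertices of $F_3$ map where'' exactly right, since $F_3$ (unlike the bull, $Q$, or $F_1$, $F_2$) has a genuinely asymmetric attachment of its two pendant vertices — one pendant hangs off one triangle-vertex and the other off a different triangle-vertex, and in the six-vertex drawing there is also a fourth clique vertex. This asymmetry means the relevant ``nested neighbourhoods'' claim may need to be split into a couple of cases according to whether the two $I$-vertices under consideration are adjacent to the same or different vertices of the $Q$-scaffold, and the bookkeeping of the bounded ``exceptional'' sets (the analogues of $z_i$ and $X_1', X_1''$ in the $F_1$ proof) will need a little care to ensure the final module $X''$ is still nonempty of size $\geq 2$. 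A secondary subtlety, exactly as flagged after Theorem~\ref{thm:f2-split}, is that $Q$ has a unique split partition, and I should check the analogous rigidity for the relevant sub-configurations of $F_3$ so that ``$G[\dots]$ is an $F_3$'' conclusions are valid regardless of how an induced subgraph might in principle be re-partitioned. Once the right nested-neighbourhoods lemma is in hand, the rest is the now-familiar chain-covering-plus-module extraction argument and should go through with only routine adjustments to the constants.
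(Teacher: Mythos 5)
Your skeleton (reduce to a prime graph via Lemma~\ref{lem:prime}, find a scaffold of vertex-disjoint forbidden configurations or else delete a bounded number of vertices to reach a $Q$-free split graph and apply Theorem~\ref{thm:split-bounded-from-weakly-bip}, then derive a non-trivial module) is exactly the paper's. But the core of your argument has a genuine gap, and it stems from a misreading of $F_3$. The two non-clique vertices of $F_3$ are not two pendants hanging off different triangle vertices: one of them has degree~$2$ and the other degree~$1$, and the degree-$1$ vertex's unique neighbour is one of the two neighbours of the degree-$2$ vertex. Consequently, relative to a split partition $(K,I)$, the configuration that $F_3$-freeness forbids is a $K_4$ inside $K$ together with $x,x'\in I$ such that $x$ meets the $K_4$ in exactly two vertices and $x'$ meets it in exactly one of those two. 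Your proposed analogue of Observation~\ref{obs:F1-nested-nbhds} --- ``a common non-neighbour forces comparable (or almost comparable) neighbourhoods'' --- is therefore false for $F_3$: two vertices of $I$ with common non-neighbours but \emph{no common neighbour} can have badly incomparable neighbourhoods without creating the forbidden pattern, precisely because the pattern requires the degree-$1$ vertex's neighbour to be shared with the degree-$2$ vertex. Without a correct nested-neighbourhoods lemma, the chain-covering, antichain-of-maxima and module-extraction machinery you import from the $F_1$ proof has nothing to run on.

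The correct observation is in fact stronger and makes that machinery unnecessary: if $s,t\in I$ have a common neighbour and two common non-neighbours in $K$, then any vertex of $K$ distinguishing $s$ and $t$ completes an induced $F_3$ (with all four clique vertices in $K$), so by primality $N(s)=N(t)$ is impossible and hence no such pair with a distinguisher can exist --- equivalently, such a pair $\{s,t\}$ would be a non-trivial module. The paper manufactures such a pair by taking $19$ vertex-disjoint \emph{special darts} (each with its triangle in $K$ and its two remaining vertices in $I$; if there are at most $18$ of them, deleting at most $90$ vertices yields a $Q$-free split graph), showing that each of the resulting $38$ independent vertices has at least one neighbour and one non-neighbour in each of the triangles $J_1$ and $J_2$, and pigeonholing these $38$ vertices into the $6\times 6=36$ possible neighbourhood patterns in $J_1\cup J_2$. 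Two of them then agree on $J_1\cup J_2$, hence have the required common neighbour and two common non-neighbours, and primality is contradicted directly. To salvage your plan you would need to (i) correct the adjacency pattern of $F_3$, and (ii) condition your comparability claim on the existence of a common neighbour; at that point you obtain equality of neighbourhoods rather than mere nesting, and the chain argument collapses into the pigeonhole above.
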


\begin{proof}
Let~$G$ be an $F_3$-free split graph. Fix a split partition $(K,I)$ of~$G$. By
Lemma~\ref{lem:prime}, we may assume that~$G$ is prime. If~$G$ contains an
induced dart (see also \figurename~\ref{fig:bull-dart}) which has has three vertices in~$K$ and two in~$I$, we say that this dart is {\em special}.

First suppose that~$G$ does not contain~$19$ vertex-disjoint special darts.
By Fact~\ref{fact:del-vert}, we may delete at most $5 \times 18 = 90$ vertices from~$G$ to obtain a split graph with no special dart.
Since the resulting graph contains no special copies of the dart, it must be $Q$-free, and therefore has bounded clique-width by Theorem~\ref{thm:split-bounded-from-weakly-bip}.

We may therefore assume that~$G$ contains~$19$ vertex-disjoint special darts,
$D_1,\ldots,D_{19}$, say. For $h \in \{1,\ldots,19\}$, let $J_h=
\{j_{1,h},j_{2,h},j_{3,h}\} = K \cap V(D_h)$ and $I_h=\{i_{1,h},i_{2,h}\}=I
\cap V(D_h)$.
We will use the following claim.

\clm{\label{clm:one-nbhr-one-non-nbr} If $i,j \in \{1,\ldots,19\}$ then every vertex of~$I_i$ has at least one neighbour and at least one non-neighbour in~$J_j$.}

\noindent 
We prove Claim~\ref{clm:one-nbhr-one-non-nbr} as follows.
If $i=j$ then the claim follows from the definition of~$D_i$.
Suppose $i \neq j$.
If a vertex $x \in I_i$ is complete to~$J_j$ then $G[\{x\} \cup J_j \cup I_j]$ is an~$F_3$, which is a contradiction.
Therefore each vertex in~$I_i$ has at least one non-neighbour in~$J_j$.
Now suppose for contradiction that a vertex $x \in I_i$ has no neighbours in~$J_j$.
Let~$x'$ be the other vertex of~$I_i$.
It must have a non-neighbour $y \in J_j$.
Note that~$y$ is then anti-complete to~$I_i$.
Now $G[\{y\} \cup J_i \cup I_i]$ is an~$F_3$.
This contradiction completes the proof of 
Claim~\ref{clm:one-nbhr-one-non-nbr}.

\medskip
Claim~\ref{clm:one-nbhr-one-non-nbr} implies that for every~$i,j \in \{1,\ldots,19\}$, every vertex of~$I_i$ must have one of the six possible neighbourhoods in~$J_j$, namely those that contain at least one vertex of~$J_j$, but not all vertices of~$J_j$.
This means we can partition the vertices of $I_1\cup\cdots\cup I_{19}$ into 36 sets (some of which may be empty), according to their neighbourhood in $J_1 \cup J_2$.
Since $I_1\cup\cdots\cup I_{19}$ consists of 38 vertices, two of these vertices, say~$x$ and~$x'$ must have the same neighbourhood in $J_1 \cup J_2$.
Furthermore, by Claim~\ref{clm:one-nbhr-one-non-nbr}, they have a common neighbour~$y \in J_1$ and common  non-neighbours $z\in J_1$ and $z' \in J_2$.
Since the graph~$G$ is prime, the set $\{x,x'\}$ cannot be a module.
Therefore there must be a vertex~$z''$ that distinguishes~$x$ and~$x'$, say~$z''$ is adjacent to~$x$, but non-adjacent to~$x'$. Note that $z'' \in K$, so it must be adjacent to $y,z$ and~$z'$.
Now $G[x,x',y,z,z',z'']$ is an~$F_3$.
This contradiction completes the proof.\qed
\end{proof}

\section{Completing the Proof of Theorem~\ref{thm:split-classification}}\label{s-classification}

In this section we use the results from the previous section to prove our main result. We also need the following lemma.

\begin{lemma}[Key Lemma]\label{lem:split-f4-f5-lemma}
If the class of~$H$-free split graphs has bounded clique-width then~$H$
or~$\overline{H}$ is isomorphic to~$K_r$ for some~$r$ or is an induced subgraph
of~$F_4$ or~$F_5$.
\end{lemma}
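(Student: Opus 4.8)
The plan is to prove the contrapositive: assuming the class of $H$-free split graphs has bounded clique-width, I will show that $H$ or $\overline{H}$ is a complete graph or an induced subgraph of $F_4$ or $F_5$. The engine for the lower-bound side is Lemma~\ref{lem:split-are-unbdd}: the class of \emph{all} split graphs has unbounded clique-width, and hence if $\mathcal{C}$ is any set of split graphs of unbounded clique-width and $H$ is an induced subgraph of no graph in $\mathcal{C}$, then the class of $H$-free split graphs is unbounded. So the technical heart of the argument is to exhibit, for every ``bad'' $H$ (one that is neither a complete graph nor an induced subgraph of $F_4$, $F_5$, nor — by Lemma~\ref{lem:complement} — has $\overline{H}$ of one of those forms), a suitable family of split graphs of unbounded clique-width avoiding $H$ as an induced subgraph. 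A convenient way to organise this is to note that, since split graphs are exactly the $(2K_2, C_4, C_5)$-free graphs and these three graphs are themselves split (indeed $\overline{2K_2}=C_4$, so only $2K_2$ and $C_4$ are relevant up to complementation, with $C_5$ self-complementary), one naturally splits into cases according to whether $H$ or $\overline{H}$ contains an induced $2K_2$.

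The case analysis I would carry out is the following. First, if neither $H$ nor $\overline{H}$ contains an induced $2K_2$, then (since $H$ is a split graph) $H$ is $2K_2$-free and co-$2K_2$-free, i.e. $H$ is a split graph whose complement is also $2K_2$-free; such graphs are extremely restricted — they are ``threshold-like'' — and I would show directly that any such $H$ that is not a complete graph or co-complete graph is in fact an induced subgraph of $F_4$ or $F_5$ (both of which contain reasonably rich threshold-type subgraphs), by checking the small number of possibilities on at most the seven vertices that $F_4$ and $F_5$ have. Second, if $\overline{H}$ contains an induced $2K_2$ (the case where $H$ itself contains $2K_2$ being symmetric via Lemma~\ref{lem:complement}), then I exploit the fact that $2K_2 \subseteq_i \overline{H}$ forces $H$ to embed into every graph of a rich enough family: concretely, one takes a standard family $\mathcal{W}$ of split graphs of unbounded clique-width built from path/wall encodings — the same kind of family used to prove Lemma~\ref{lem:split-are-unbdd} — in which one side is a clique $K$ and the other an independent set $I$, and observes that if $H$ has no induced $2K_2$ then $H$ embeds into such graphs only in very constrained ways (essentially one of $K$, $I$ contributes almost everything), which again pins $H$ down to a subgraph of $F_4$ or $F_5$ or a complete/co-complete graph; otherwise the family is $H$-free and we are done.

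The main obstacle, and where the real work lies, is the finite but delicate bookkeeping in the second case: one must show that the only split graphs $H$ for which \emph{no} graph of unbounded clique-width in the relevant split-graph family is $H$-free are precisely the induced subgraphs of $F_4$, $F_5$ (and complete/co-complete graphs). Equivalently, for \emph{every} split graph $H$ on at most, say, eight vertices that is not an induced subgraph of $F_4$ or $F_5$ and not complete or co-complete, one must locate a concrete unbounded-clique-width split family avoiding it. I expect this to be handled by identifying a short list of ``obstruction'' split graphs $G_1, G_2, \ldots$ of unbounded clique-width (each a mild variant of the construction behind Lemma~\ref{lem:split-are-unbdd}, e.g.\ obtained by adding a universal vertex, an isolated vertex, or a pendant vertex to the base family, which preserves unboundedness by Facts~\ref{fact:del-vert}--\ref{fact:bip}) such that every bad $H$ fails to be an induced subgraph of at least one $G_i$; the verification that these finitely many $G_i$ suffice is then a routine — if tedious — enumeration over the induced subgraphs of $F_4$ and $F_5$. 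I would present the seven/eight-vertex case check as a lemma-internal table or enumeration rather than grinding through it inline.
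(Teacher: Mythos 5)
There is a genuine gap, and it starts with the case division. You correctly note that $H$ must be a split graph (else all split graphs are $H$-free and Lemma~\ref{lem:split-are-unbdd} applies), but split graphs are by definition $(2K_2,C_4,C_5)$-free, and $C_4=\overline{2K_2}$, so for \emph{every} split graph $H$ neither $H$ nor $\overline{H}$ contains an induced $2K_2$. Your second case is therefore empty and everything falls into your first case. The claim that carries that case --- that a $2K_2$-free, co-$2K_2$-free graph is so ``threshold-like'' that any one which is not complete or co-complete must be an induced subgraph of $F_4$ or $F_5$ --- is false: the class of such graphs contains all split graphs, of which there are infinitely many that are neither (co-)complete nor contained in $F_4$ or $F_5$ (e.g.\ $K_{1,3}+3P_1$, or any sufficiently large split graph). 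The entire content of the lemma is precisely to show that for each of these infinitely many $H$ the class of $H$-free split graphs has \emph{unbounded} clique-width, so the problem cannot be reduced to a finite check over graphs on at most eight vertices; your ``short list of obstruction families $G_1,G_2,\ldots$'' would have to certify unboundedness against infinitely many candidates $H$, and you give no mechanism for doing so.

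The paper's mechanism is the one your proposal is missing: given an $H$-free split graph with split partition $(K,I)$, complement the clique $K$ (Fact~\ref{fact:comp}) to obtain a labelled bipartite graph; this shows that if $H$-free split graphs have bounded clique-width then the class of \emph{weakly} $H^{\ell}_b$-free bipartite graphs does too, where $H^{\ell}_b$ is obtained from $H$ by complementing its clique side. Theorem~\ref{thm:bipweak} (a prior classification, which is where the infinite case analysis is actually absorbed) then forces $H^{\ell}_b$ to be a monochromatic independent set or a labelled induced subgraph of $(P_1+P_5)^b$, $(P_1+P_5)^{\overline{b}}$, $(P_2+P_4)^b$ or $(P_6)^b$, which translates back exactly to $H$ or $\overline{H}$ being $K_r$ or an induced subgraph of $F_4$ or $F_5$. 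A further subtlety you would also need to handle is that $H$ may admit several non-isomorphic split partitions (the paper shows any two differ in at most one vertex), since the labelled bipartite graph $H^{\ell}_b$ depends on the chosen partition; without this, the ``weakly free'' reduction is not sound.
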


\begin{proof}
Suppose that~$H$ is a graph such that the class of~$H$-free split graphs has
bounded clique-width. Then~$H$ must be a split graph, otherwise the class of
$H$-free split graphs would include all split graphs, in which case the
clique-width would be unbounded by Lemma~\ref{lem:split-are-unbdd}.

Suppose that~$H$ has two split partitions $(K,I)$ and $(K',I')$ that are not
isomorphic. There cannot be two distinct vertices $x,y \in I \setminus I'$, as
then $x,y \in I$, so they would have to be non-adjacent, and similarly $x,y \in K'$, so
they would have to be adjacent, a contradiction.
Hence, $|I\setminus I'|\leq 1$.
For the same reason, $|I'\setminus I|\leq 1$.

Next suppose that $|I\setminus I'|=|I'\setminus I|=1$. Then there exist
vertices $x\in I \setminus I'$ and $y \in I' \setminus I$. Let $I'' = I \setminus \{x\}$
and $K''=K \setminus \{y\}$. Then $I=I'' \cup \{x\}, K=K'' \cup \{y\}, I'=I''
\cup \{y\}$ and $K'=K'' \cup \{x\}$. Since~$x\in I$ and $x \in K'$, $x$ must be
anti-complete to~$I''$ and complete to~$K''$. Since $y\in I'$ and $y \in K$ the
same is true for $y$. ($x$ and~$y$ may or may not be adjacent to each-other.)
However, this means that $(K,I)$ and $(K',I')$ are isomorphic split partitions
of~$H$, which is a contradiction. 

Due to the above, we may assume without loss of generality that $|I\setminus I'|=1$ and $|I'\setminus I|=0$.
Hence there is a vertex~$x$ such that $I=I' \cup \{x\}$ and $K' = K \cup
\{x\}$. Let $H'=H \setminus \{x\}$ and note that~$H'$ has split partition
$(K,I')$ (though~$H'$ may also have a different split partition) and that~$H$ can be obtained
from~$H'$ by adding a vertex that is adjacent to every vertex of~$K$ and
non-adjacent to every vertex of~$I'$.

Let~$H'^\ell_b$ be the labelled bipartite graph obtained from~$H'$ by
complementing~$K$, colouring every vertex of~$I'$ white and every vertex of~$K$
black. 
Let~$G$ be a weakly $H'^\ell_b$-free graph. Then 
$G$ has a black-and-white labelling $\ell^*$ such that 
$G^{\ell^*}=(B_G^{\ell^*},W_G^{\ell^*},E_G)$ does not contain~$H'^\ell_b$
as a labelled induced subgraph.
Let~$G_S$ be the split graph obtained from~$G^{\ell^*}$ by
complementing the set of black vertices. Then
$(B_G^{\ell^*},W_G^{\ell^*})$ is a split partition of~$G_S$ 
that does not contain $(K,I')$. 
Therefore, $G_S$ has a split partition that does not contain $(K,I)$ or $(K',I')$.
Hence, $G_S$ is $H$-free. 
Since we assumed that the class of $H$-free split graphs has bounded
clique-width, by Fact~\ref{fact:comp} it
follows that the class of weakly $H'^\ell_b$-free bipartite graphs must have
bounded clique-width. By Theorem~\ref{thm:bipweak}, $H'^\ell_b$  must therefore
be a black independent set, a white independent set or a labelled induced
subgraph of $(P_1+\nobreak P_5)^b,\allowbreak (P_1+P_5)^{\overline{b}},
(P_2+\nobreak P_4)^b$ or~$P_6^b$. This corresponds to the cases where~$H$ is a
clique, an independent set or an induced subgraph of $\overline{F_4}, F_4,
F_5$ or~$\overline{F_5}$, respectively.

Now suppose that all split partitions of~$H$ are isomorphic. We argue similarly. Let $(K,I)$ be a split partition of~$H$.
Let~$H^\ell_b$
be the labelled bipartite graph obtained from~$H$ by complementing~$K$,
colouring every vertex of~$I$ white and every vertex of~$K$ black.
Let~$G^{\ell^*}=(B_G^{\ell^*},W_G^{\ell^*},E_G)$ be a labelled bipartite graph and
let~$G_S$ be the split graph obtained from~$G^{\ell^*}$ by complementing the set
of black vertices. Then~$G^{\ell^*}$ 
does not contain~$H^\ell_b$ as a labelled induced subgraph
if and only if~$G_S$ is $H$-free. Proceeding as before, we find
that again~$H$ must be an independent set or a clique or an induced subgraph of
either $F_4,\overline{F_4},F_5$ or~$\overline{F_5}$ (though this time the extra
vertex~$x$ is not present, so~$H$ must be a proper induced subgraph of one of these graphs).
\qed\end{proof}

We are now ready to prove Theorem~\ref{thm:split-classification}.

\medskip
\noindent
\faketheorem{Theorem \ref{thm:split-classification} (restated).}
{\em Let~$H$ be a graph such that neither~$H$ nor~$\overline{H}$ is in $\{F_4,F_5\}$.
The class of $H$-free split graphs has bounded clique-width if and only if
\begin{itemize}
\item [$\bullet$] $H$ or~$\overline{H}$ is isomorphic to~$rP_1$ for some $r \geq 1$;
\item [$\bullet$] $H$ or $\overline{H} \ssi \bull+P_1$;
\item [$\bullet$] $H$ or $\overline{H} \ssi F_1$; 
\item [$\bullet$] $H$ or $\overline{H} \ssi F_2$; 
\item [$\bullet$] $H$ or $\overline{H} \ssi F_3$; 
\item [$\bullet$] $H$ or $\overline{H} \ssi Q$ or
\item [$\bullet$] $H$ or $\overline{H} \ssi K_{1,3}+2P_1$.
\end{itemize}
}

\begin{proof}
If $H=rP_1$, the result follows from Theorem~\ref{thm:rP_1}.
The corresponding result for~$\overline{H}$ then follows from Lemma~\ref{lem:complement}.
For the remainder of the proof, we may therefore assume that~$H$ contains at least one edge and at least one non-edge.

If~$H$ is an induced subgraph of $\bull+P_1, F_1, F_2, F_3, Q$ or $K_{1,3}+\nobreak 2P_1$ then the result follows from Theorems \ref{thm:split-bounded-from-weakly-bip}, \ref{thm:f1-split}, \ref{thm:f2-split}, \ref{thm:f3-split}, \ref{thm:split-bounded-from-weakly-bip} and~\ref{thm:k13+2p1-split}, respectively.
The corresponding results for~$\overline{H}$ then follow from Lemma~\ref{lem:complement}.

Now suppose that the class of $H$-free split graphs has bounded clique-width.
Recall that~$H$ contains at least one edge and at least one non-edge.
By Lemma~\ref{lem:split-f4-f5-lemma} combined with Lemma~\ref{lem:complement}, we may assume that~$H$ is an induced subgraph of~$F_4$ or~$F_5$.
Note that both~$F_4$ and~$F_5$ have seven vertices.
The six-vertex induced subgraphs of~$F_4$ are: $\bull +\nobreak P_1,
\overline{F_1}, \overline{F_3}$ and $K_{1,3}+\nobreak 2P_1$.
The six-vertex induced subgraphs
of~$F_5$ are: $\bull+\nobreak P_1,F_1,F_2,\overline{F_2},F_3,\overline{F_3}$ and~$Q$.
These graphs and their complements are precisely the bounded cases considered above.
\qed
\end{proof}

\section{Future Work}\label{s-con}

Our goal is to solve the two open cases for $H$-free split graphs, namely $H\in \{F_4,F_5\}$ and the two open cases for $H$-free chordal graphs, namely $H\in \{F_1,F_2\}$. This does not seem a straightforward task. There is still some hope that, for $i\in\{1,2\}$, the class of $F_i$-free chordal graphs has bounded clique-width 
as we proved that the class of $F_i$-free split graphs has bounded clique-width. Note that for $i\in\{4,5\}$, the class of $F_i$-free chordal graphs has unbounded clique-width and it does not seem possible to modify the construction that shows this to get a proof for split graphs (hence we could potentially have two other subclasses of split graphs with bounded clique-width). 
We also recall that such results, just as in other cases~\cite{BDHP15,DHP0,DP14}, could be useful for completing the classification for $(H_1,H_2)$-free graphs.

\bibliography{mybib-split}
\end{document}